\let\csname equation*\endcsname\relax
\let\csname endequation*\endcsname\relax
\theoremstyle{plain}
\newtheorem{theorem}{Theorem}[section]
\newtheorem{conj}[theorem]{Conjecture}
\newtheorem{lemma}[theorem]{Lemma}
\newtheorem{cor}[theorem]{Corollary}
\newcounter{mycounter}
\newcommand{\particles}[1]{
		\foreach \i in {0,...,11}
	{
		\draw [very thick] (\i+0.1,0) -- (\i+0.9,0);
	}
  \setcounter{mycounter}{1}
    \foreach \i in #1
    {
    \ifnum \i = 0
      \draw [very thick, dashed] (\themycounter-0.5,0.4) circle (8pt);
    \else
     \ifnum \i = 1
      \draw [very thick] (\themycounter-0.5,0.4) circle (8pt);
      \else
     \ifnum \i = 21
      \draw [very thick, red] (\themycounter-0.5,0.4) circle (8pt);
       \else
     \ifnum \i = 20
      \draw [very thick, dashed,red] (\themycounter-0.5,0.4) circle (8pt);
    \fi
    \fi
    \fi
    \fi
    \stepcounter{mycounter};
    }
}
\title{Single impurity in the Totally Asymmetric Simple Exclusion Process}
\date{} % Leave this blank to omit the date
\author[1]{Luigi Cantini\thanks{\href{mailto:luigi.cantini@cyu.fr}{luigi.lantini@cyu.fr}}}
\author[1,2]{Ali Zahra\thanks{\href{mailto:ali.zahra@cyu.fr}{ali.zahra@cyu.fr}}}
\affil[1]{Laboratoire de Physique Th\'eorique et Mod\'elisation, CNRS UMR 8089, CY Cergy Paris Universit\'e, 95302 Cergy-Pontoise Cedex, France}
\affil[2]{
Departamento de Matem\'atica, Instituto Superior T\'ecnico - Av. Rovisco Pais, 1049-001 Lisboa, Portugal}
\begin{document}
\maketitle

\begin{abstract}
We examine the behavior of a single impurity particle embedded within a 
Totally Asymmetric Simple Exclusion Process (TASEP). 
By analyzing the impurity's dynamics, characterized by two arbitrary hopping parameters $ \alpha $ and $\beta$, 
we investigate both its macroscopic impact on the system and its individual trajectory, providing new insights into the interaction between the impurity and the TASEP environment.
We classify the induced hydrodynamic limit shapes based on the initial densities to the left and right of the impurity,  along with the values of
the parameters $\alpha$,$\beta$. 
We develop a new method that enables the analysis of the 
impurity's behavior within an arbitrary density field, thereby generalizing 
the traditional coupling technique used for second-class particles. With 
this tool, we extend to the impurity case under certain parameter conditions, Ferrari and Kipnis's results on the distribution of the 
asymptotic speed of a second-class particle within a rarefaction fan.
\end{abstract}

\section*{Introduction}
The study of non-equilibrium stochastic processes is pivotal in 
understanding various natural phenomena, encompassing a wide range of 
fields including physics, chemistry \cite{van1992stochastic}, biology 
\cite{edelstein2005mathematical,ditlevsen2013introduction}, and 
interdisciplinary areas \cite{blythe2007stochastic}. 
Non-equilibrium statistical mechanics exhibits a 
surprisingly rich array of physical phenomena, including phase transitions 
\cite{privman1997nonequilibrium}, even in one--dimensional 
systems. Models in one spatial dimension often possess the 
advantageous feature of being analytically tractable. 
A paradigmatic example of exactly solvable model is the Asymmetric Simple 
Exclusion Process 
(ASEP), which is widely recognized as a fundamental model for 
non--equilibrium transport processes \cite{chou2011non}. ASEP has found 
applications in diverse contexts such as traffic and pedestrian flow  
\cite{chowdhury2005physics} , mRNA translation by ribosomes 
\cite{ciandrini2010role}, and motor protein transport along single 
filaments \cite{chowdhury2005physics}, among others. 
At a macroscopic level, ASEP's hydrodynamic behavior is well--established 
\cite{rost1981non,rezakhanlou1991hydrodynamic}. However, introducing even a 
single microscopic impurity can substantially impact the system's overall 
behavior and properties, creating complex effects that remain much less 
understood.
One example of an impurity that has received considerable interest in the literature is a slow bond at a single site 
\cite{kolomeisky1998asymmetric,basu2017invariant}. 
In this article, we explore a different type of impurity: an impurity particle (hereafter referred to as the "\emph{impurity}"), which follows distinct dynamics compared to regular particles (hereafter referred to simply as "particles").
To be more specific, we consider the 
Totally Asymmetric Simple 
Exclusion Process (TASEP) in presence of an impurity. In this model 
each site of a one--dimensional lattice 
can be either empty or occupied  by either a particle or  an impurity.
%
%{\bf make a figure}
%
A particle hops forward on an empty site with rate $1$, while an impurity hops on an empty site with rate $\alpha$.
In addition to this, a particle can overtake an impurity with rate $\beta$. 
Denoting a particle by $\bullet$, an empty site by $\circ$ and an impurity by
$\ast$, 
the dynamics of the model can be schematically represented by the following exchanges with the corresponding rates
$$
 \bullet\, \circ \xrightarrow{1} \circ \bullet\qquad
\ast\,\circ \xrightarrow{\alpha} \circ \ast \qquad
\bullet\, \ast \xrightarrow{\beta} \ast \bullet.
$$
This model has  
been first considered in \cite{derrida1996statphys,mallick1996shocks},
where the stationary measure on a finite periodic lattice was written
in a matrix product ansatz form \cite{derrida1993exact, evans2009matrix}. 
Generalizations with an arbitrary number of impurities have considered in  \cite{cantini2008algebraic,cantini2022hydrodynamic}.

%denoted by $\bullet$, 
%
%, $\ast 
%\circ \xrightarrow{\alpha} \circ \ast$, where $\circ$ represents an empty site. Regular particles, denoted by $\bullet$, can overtake the impurity at 
%an arbitrary rate: $\bullet \ast \xrightarrow{\beta} \ast \bullet$. 
When $\alpha = \beta = 1$,  the impurity reduces to what is commonly known as a "\emph{second-class 
particle}", a concept which was introduced as early as the exclusion process itself 
\cite{liggett1976coupling}.
%and naturally appears as a discrepancy between two coupled TASEP systems 
%\cite{liggett1976coupling, ferrari1992shocks}. 
From a hydrodynamic perspective, the evolution of the particle density in this 
setting is governed by the inviscid Burgers equation. 
Here, a second-class particle acts as a tracer, moving along the characteristics of the inviscid Burgers equation that originate from its position
\cite{ferrari1992shocks, 
rezakhanlou1995microscopic}. 
In the scenarios involving a shock, a second-class particle 
accurately tracks the shock's position \cite{andjel1988shocks, 
derrida1993exact}. Interestingly, if such a particle is added to a point where multiple characteristics emerge (i.e., a decreasing discontinuity), 
it will randomly select one of the available characteristics.
The distribution of the asymptotic average speed of the second--class particle is highly sensitive to its precise initial position.
%The distribution of the asymptotic average speed of the second class particle is very sensitive to the 
%exact initial position of the second--class particle.
Ferrari and Knipsis \cite{ferrari1995second} have considered the case where in the initial 
configuration the first--class particles are 
distributed following product Bernoulli measure with densities
$\rho_L > \rho_R$ to the left and right of the second--class 
particle respectively. They have shown that the asymptotic speed of 
the second--class particle is uniformly distributed distributed in the interval
$[1-2\rho_L, 1-2\rho_R]$ (see also \cite{mountford2005motion}).
This result has been extended to the ASEP case, as proven in 
\cite{ferrari2009collision}; see also the recent work in 
\cite{aggarwal2023asep}.
More general initial conditions in the TASEP case have been considered by Cator and 
Pimentel \cite{cator2013busemann}. Using the well--known 
last--passage percolation (LPP) formulation of the TASEP, 
and then the connection between second class 
particles and competition interfaces in the LPP 
\cite{ferrari2005competition,ferrari2006roughening}, they have shown 
that the limiting speed of the second--class particle can be 
expressed in terms of the supremum of certain random walks. 

%
%
%
%{\bf cite also Cator-Pimentle }
% with a 
%uniform distribution. 
%In other words, it will uniformly choose an 
%asymptotic speed from the available options and maintain it 
%\cite{ferrari1995second}. 
%Second-class particles also provide a microscopic perspective for describing density fluctuations in the Burgers equation \cite{van1991fluctuations,ferrari1991microscopic}.

\vspace{.3cm}

While a second-class particle does not alter the macroscopic evolution of 
the first-class particle density, this may no longer hold for an impurity 
with arbitrary values of the parameters $\alpha$ and $\beta$. The first 
objective of this paper is to investigate the impact of a single impurity 
on the density profile, considering the impurity rates $\alpha$ and 
$\beta$. Using heuristic hydrodynamics, we analyze the evolution of an 
initially smooth profile and then focus on cases where the impurity is 
positioned at a density discontinuity, identifying conditions under which 
the impurity modifies the density profile and detailing the nature of these 
changes.

More specifically, we examine an initial profile consisting of a region 
with constant density $\rho_L$ to the left of the impurity and a region with 
constant density $\rho_R$ to the right. When $\rho_L > \rho_R$, the absence of the 
impurity would  lead to the development of a rarefaction fan.
This fan structure generally persists, except in the parameter region 
defined by $\alpha + \beta < 1$, $\beta < \rho_R$, and $\alpha < 1 - \rho_L$. 
Under these conditions, the impurity induces an anti--shock, a 
discontinuity where the density increases from left to right, with a left 
density of $1 - \alpha$ and a right density of $\beta$. 
%
%
%Additionally, a 
%shock may appear either to the left or right of the impurity, further 
%modifying the density profile in its vicinity.
%This remains the case apart in the region of the parameters space  with
%$
%\alpha+\beta<1$, $\beta<\rho_R$ and $\alpha<1-\rho_L$. If these conditions hold, 
%then the impurity produce an anti--shock, namely a discontinuity with 
%increasing density from left to right, a left density equal to $1-\alpha$ 
%and right density $\beta$. 
%
On top of the anti--shock we may observe a shock 
to the left and or to the right of the impurity. 
 These results are 
summarized in Figure \ref{zoo-fan}.  For numerical simulations illustrating this behavior, see \ref{fig:munu1a}--\ref{fig:munu1d}.
When $\rho_L < \rho_R$, the absence of the impurity would result in a shock 
propagating with speed $v_s=1-\rho_L-\rho_R$.
This shock structure generally remains intact, except in the parameter 
region defined by
$\beta < \rho_R$, and $\alpha < 1 - \rho_L$.
Under these conditions, the shock splits into two shocks, separated by an anti-shock located at the impurity position. For numerical simulations illustrating this behavior, see Figures \ref{shocksa}--\ref{shocksd}.

A second objective of this paper is to analyze the trajectory of the 
impurity itself. When the initial density profile is smooth, the impurity 
propagates at a constant average speed over long times, determined entirely 
by the initial conditions. However, if the impurity is positioned at a 
decreasing density discontinuity with $\alpha+\beta>1$, the situation 
changes: while the impurity still reaches a constant asymptotic speed, this 
speed becomes a random variable, no longer uniquely determined by the 
initial conditions.
The main challenge in extending the analysis of a second-class particle's 
trajectory to that of an impurity with arbitrary jumping rates
$\alpha$ and $\beta$ 
arises from the breakdown of the standard approach, which characterizes the second-class particle as a discrepancy between two coupled TASEP systems
\cite{liggett1976coupling,ferrari1995second}. 
To address this difficulty, we use a \emph{hole--particle} pair 
representation of the impurity, as introduced by Ferrari and Pimentel in the case of a second--class particle \cite{ferrari2005competition}. 
To adapt this approach to the impurity case, we apply Weber's interchangeability theorem from queueing theory 
\cite{weber1979interchangeability}.
This allows us to represent the trajectory of the impurity alone (rather than the entire system) through the dynamics of a particle--hole pair in a TASEP system, where a tagged particle and a tagged hole move with respective jumping rates $\alpha$ and $\beta$.

We are thus able to generalize the result by Ferrari and Kipnis for the 
case where $\alpha,\beta \leq 1$ and $\alpha+\beta\geq 1$ and the impurity
is initially positioned at the interface between a fully occupied region on 
the left and an empty region on the right. 
We prove that, under these conditions, the asymptotic speed of the impurity 
is uniformly distributed over the interval $[1-2\beta, 2\alpha-1]$.
If $\alpha>1$, the impurity may ''escape'' to the right of the rarefaction 
fan; similarly, if $\beta>1$ the impurity may escape to the left.
We conjecture that, conditioned to the impurity to remain within the fan,
its asymptotic speed is uniformly distributed in the interval
 $[\max(1-2\beta,-1), \min(2\alpha-1,1)]$. 
Numerical simulations suggest that, for more general left/right density configurations, the asymptotic speed of the impurity is no longer uniformly distributed within the allowed interval.

The paper is organized as follows.
In Section \ref{Density profile}, we utilize hydrodynamics to investigate 
the behavior of the density profile as a function of the impurity rates $
\alpha$  and $\beta$.
Section \ref{Asymptotic} begins with a review of the \emph{hole--particle} 
pair representation of the impurity. In Section \ref{section:dynamics}, we 
establish one of the paper's main results, Theorem \ref{first:Theorem}, 
which states that the trajectory of the impurity shares the same 
distribution as that of a hole--particle pair in a TASEP with a tagged 
particle having a jumping rate $\alpha$ and a tagged hole with jumping rate 
$\beta$.
In Section \ref{sect:speed-fan}, we apply Theorem \ref{first:Theorem}, to 
first recover, using purely probabilistic arguments, the known result 
\cite{mallick1996shocks,derrida1999bethe} regarding the asymptotic 
speed of an impurity in an initially uniform density profile. 
We then apply the same theorem to determine the distribution of the speed 
of an impurity with jumping rates $\alpha,\beta<1$ and $\alpha+\beta>1$,  
initially locate at the interface of a $1$--$0$ density profile.
Finally, we conclude with a computation of the escape probabilities for the 
impurity from the rarefaction fan, starting from an initial $1$--$0$ density 
profile.

\section{Hydrodynamics in presence of a single impurity}
\label{Density profile}

The behavior of a system with a single impurity in an initially uniform 
profile of particles with density $\rho$ was analyzed in 
\cite{mallick1996shocks}. 
It was found that the impurity moves at a constant asymptotic  speed
$v_\ast$, which is given by the following compact expression:
\begin{equation}\label{def-speed}
v_\ast(\rho) = \min(\alpha,1-\rho)- \min(\beta,\rho). 
\end{equation}
This result was derived in \cite{mallick1996shocks} 
using the matrix product ansatz to represent the stationary measure of 
the model under periodic boundary conditions. 
A different derivation was presented in \cite{derrida1999bethe}, 
employing Bethe ansatz techniques. In Section \ref{sect:speed-fan}, 
we will provide a purely probabilistic derivation of this result.

\subsection{Constant or smooth initial density profile}

As shown in \cite{mallick1996shocks}, the presence of an impurity in an initially uniform profile of density $\rho_0$
has no macroscopic impact on the particle density unless
$\beta < \rho_0 < 1-\alpha$. 
Within this range, the impurity induces an \emph{anti-shock}, i.e.
an increasing discontinuity in the density profile at its location, with density
$\beta$ to its right greater than the density $1-\alpha$ to its left.
This anti--shock moves at constant speed $\alpha-\beta$, which corresponds to the speed of the impurity  $v_\ast$.
%Without the impurity, this discontinuity would not be stable and would develop into a fan.
%
To determine the complete evolution of the density profile, we assume that, aside from the position of the impurity, the density profile follows the Burgers equation:
\begin{equation}\label{burgers}
\partial_t \rho + \bar{v}(\rho) \partial_x \rho=0,
\end{equation}   
where the characteristic speed $\bar{v}(\rho)=1-2\rho$ represents the 
propagation velocity of a patch of local density $\rho$.
Consider the domain to the left of the impurity. Here, as described 
above, the density profile is constrained to a value of $1-\alpha$ just to the left of the impurity, while it reaches density $\rho_0$ as $x\rightarrow -\infty$. Solving the Burgers equation with these boundary conditions, we find that the profile develops a shock, transitioning between density $\rho_0$ on the left and $1-\alpha$ on the right.
This shock propagates at a speed of $\alpha-\rho_0$, which is, as expected, slower than $v_\ast$. Similarly, on the right side of the impurity, we observe the formation of a second shock between a density
$\beta$ on the left and $\rho_0$ on the right. This shock moves at a speed of
 $1-\rho_0-\beta$, which is greater than $v_\ast$. 
%
%Additionally, the density profile develops two shocks. 
%The first shock is a discontinuity between density $\rho$ on the left and $1-\alpha$ the right.  This shock is located to the left of the impurity and moves at speed $\alpha-\rho$, which is less than $v_\ast$. The second shock is a discontinuity between density $\beta$ to the left and $\rho$ the right.  This shock is located to the right of the impurity and moves at speed $1-\rho-\beta$, which is greater than $v_\ast$. 
The table below summarizes the results regarding the impurity speed and the occurrence of a density discontinuity.  

\begin{center}
\renewcommand{\arraystretch}{1.5}
\renewcommand{\tabcolsep}{0.4cm}
\begin{tabular}{|c|c|c|c|}
\hline
&$v_\ast(\rho_0)$ & $\rho_0$ & Discontinuity\\
\hline 
(a)&$1-2\rho_0$ & $\beta\geq \rho_0 \geq 1-\alpha$ & NO\\%[5pt]
(b)& $1-\rho_0-\beta$ & $\max (\beta,1-\alpha) \leq \rho_0 $ & NO \\%[5pt]
(c)&$\alpha-\rho_0$ & $\min (\beta,1-\alpha) \geq \rho_0 $ & NO \\%[5pt]
(d)&$\alpha-\beta$ & $\beta\leq \rho_0 \leq 1-\alpha $ & YES\\%[5pt]
\hline
\end{tabular}
\end{center}

\vspace{.2cm}
\noindent
Suppose now that the initial density profile is no longer uniform but smooth.
To analyze the situation, we assume local equilibrium, which means
that the behavior of the impurity eventually depends only on the local  
density in its vicinity.
Let's distinguish the case $\alpha+\beta$ greater or smaller than $1$.

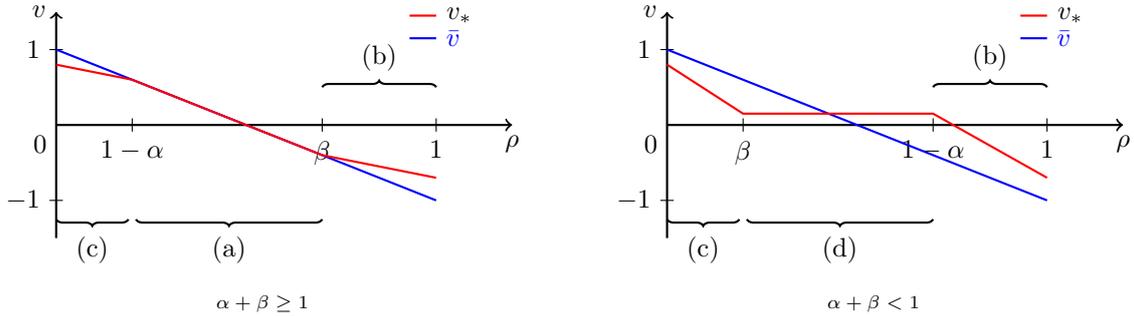
\begin{figure}[H]
	\begin{subfigure}{.5\textwidth}
		\centering
		\begin{tikzpicture}[scale = 0.5]
	\draw[thick,->] (0,0) -- (12,0) node[below] {$\rho$};
	\draw[thick,->] (0,-3) -- (0,3) node[left] {$v$};
	\draw (0cm+5pt,2) -- ( 0cm-5pt,2) node[left] {$1$};	
	\draw (0cm+5pt,-2) -- ( 0cm-5pt,-2) node[left] {$-1$};	
	
		\draw (7 cm,5pt) -- (7 cm,-5pt) node[below ] {$\beta$};
	
%	\draw (5 cm,1pt) -- (5 cm,-1pt) node[anchor=north] {$0.5$};
	
	\draw (10 cm,5pt) -- (10 cm,-5pt) node[below] {$1$};
 	\draw (0 cm,1pt) -- (0 cm,-1pt) node[below left] {$0$};
   	\draw (2 cm,5pt) -- (2 cm,-5pt) node[below] {$1-\alpha$};

        \draw [thick, blue] (0,2) -- (10,-2);
        
        \draw [thick,red] (0,1.6) -- (2,1.2) -- (7,-0.4*2) -- (10,-0.7*2);

\draw [
    thick,
    decoration={
        brace,
        mirror,
        raise=1cm
    },
    decorate
] (0,-0.5) -- (1.9,-0.5)
node [pos=0.5,anchor=north,yshift=-1.1cm] {(c)};

\draw [
    thick,
    decoration={
        brace,
        mirror,
        raise=1cm
    },
    decorate
] (2.1,-0.5) -- (7,-0.5)
node [pos=0.5,anchor=north,yshift=-1.1cm] {(a)};

\draw [
    thick,
    decoration={
        brace,
        raise=0cm
    },
    decorate
] (7,1) -- (10,1)
node [pos=0.5,anchor=north,yshift=0.7cm] {(b)};
\draw [thick, blue](9.3,2.3) -- (10,2.3) node[anchor= west] {$\bar{v}$};
\draw [thick,red] (9.3,2.9) -- (10,2.9);
\draw [very thick] (10,2.9) node[anchor= west] {$v_*$};
\end{tikzpicture}
		\caption*{\scriptsize $\alpha+\beta \geq 1$}
		\label{fig:sub1}
	\end{subfigure}
		\begin{subfigure}{.5\textwidth}
		\centering
		\begin{tikzpicture}[scale = 0.5]
	\draw[thick,->] (0,0) -- (12,0) node[below] {$\rho$};
	\draw[thick,->] (0,-3) -- (0,3) node[left] {$v$};
	\draw (0cm+5pt,2) -- ( 0cm-5pt,2) node[left] {$1$};	
	\draw (0cm+5pt,-2) -- ( 0cm-5pt,-2) node[left] {$-1$};	
	
	\draw (7 cm,5pt) -- (7 cm,-5pt) node[below ] {$1-\alpha$};
	
%	\draw (5 cm,1pt) -- (5 cm,-1pt) node[anchor=north] {$0.5$};
	
	\draw (10 cm,5pt) -- (10 cm,-5pt) node[below] {$1$};
 	\draw (0 cm,1pt) -- (0 cm,-1pt) node[below left] {$0$};
   	\draw (2 cm,5pt) -- (2 cm,-5pt) node[below] {$\beta$};

        \draw [thick, blue] (0,2) -- (10,-2);
        
        \draw [thick,red] (0,1.6) -- (2,0.3) -- (7,0.3) -- (10,-0.7*2);

\draw [
    thick,
    decoration={
        brace,
        mirror,
        raise=1cm
    },
    decorate
] (0,-0.5) -- (1.9,-0.5)
node [pos=0.5,anchor=north,yshift=-1.1cm] {(c)};

\draw [
    thick,
    decoration={
        brace,
        mirror,
        raise=1cm
    },
    decorate
] (2.1,-0.5) -- (7,-0.5)
node [pos=0.5,anchor=north,yshift=-1.1cm] {(d)};

\draw [
    thick,
    decoration={
        brace,
        raise=0cm
    },
    decorate
] (7,1) -- (10,1)
node [pos=0.5,anchor=north,yshift=0.7cm] {(b)};
\draw [thick, blue](9.3,2.3) -- (10,2.3) node[anchor= west] {$\bar{v}$};

\draw [thick,red] (9.3,2.9) -- (10,2.9);
\draw [very thick] (10,2.9) node[anchor= west] {$v_*$};
\end{tikzpicture}
		\caption*{\scriptsize $\alpha+\beta<1$}
		\label{fig:sub2}
	\end{subfigure}

\caption{Plots of the impurity speed $v_\ast$ (red line) and the speed of the characteristics of the Burger's equation $\bar{v}$ (blue) for the cases $\alpha+\beta\geq 1$ (left) and $\alpha+\beta<1$ (right).}\label{fig:speeds}
\end{figure}

\vspace{.2cm}
\noindent
%$\boldsymbol{ \alpha+\beta>1 .}$
In the case $ \alpha+\beta\geq 1$, the parameter region (d) 
does not exist. 
This implies that the particle density is unaffected by the impurity and 
therefore satisfies the Burgers equation everywhere.
%:
%\begin{equation}\label{burgers}
%\partial_t \rho + \bar{v}(\rho) \partial_x \rho=0,
%\end{equation}   
%where the characteristic speed $\bar{v}(\rho)=1-2\rho$ represents the 
%propagation velocity of a patch of local density $\rho$.
The impurity then moves within an independently evolving density profile.   
When the initial surrounding density falls into region (a)
we have $v_\ast=\bar{v}$ (see Fig.\ref{fig:speeds}, left)  meaning that 
the impurity moves together with the patch of density $\rho$.
If the initial surrounding density falls into region
(b) the impurity moves faster than the density patch  and continues to do so unless  
it encounters a region of density $\beta$, at which point 
it stabilizes at a patch of this density, moving at speed 
$v_\ast=\bar{v}(\beta)=1-2\beta$. Similarly,  when the 
initial density surrounding the impurity falls into region 
(c),  the impurity moves slower than the density patch 
unless it encounters a region of density $1-\alpha$, at 
which points it stabilizes at a patch of this density, 
moving at speed $v_\ast=\bar{v}(1-\alpha)=2\alpha-1$.

\vspace{.2cm}
\noindent
The situation becomes more intricate when $\alpha+\beta<1$, as the presence of the impurity can significantly alter the density profile. 
This alteration arises from the potential formation of an anti--shock.
This phenomenon occurs if the initial surrounding density falls into 
region (d). In such instances, as explained above in the case of an 
initial uniform density profile, an anti--shock emerges between densities 
$1-\alpha$ to the left of the impurity and $\beta$ to its right.
Depending on the whole initial profile, a shock may manifest behind 
the impurity, in front of it, or even both simultaneously.

\subsection{$1-0$ step initial density profile}
\label{sect-0-1}

The preceding discussion focused solely on initially smooth density 
profiles. It is intriguing to explore the system's behavior when the 
impurity is initially positioned at a discontinuity.  
It is instructive to first examine the scenario where all sites to the left of the impurity are occupied by particles, while all sites to its right are empty. Later, we will delve into the case of arbitrary initial step profiles.

\subsubsection*{Case $\alpha+\beta\geq 1$}

In this case we know that the impurity has no impact on the evolution of
the density profile. Suppose that the initial  step is located at the origin. We know that it will evolve into a rarefaction fan of equation $\rho(x;t) = u(x/t)$ with
\begin{equation}
u(v) = 
\left\{
\begin{array}{cc}
1 & v\leq -1\\
\frac{1-v}{2} & -1 \leq v\leq 1\\
0 & v\geq 1
\end{array}
\right.
\end{equation}
Once the fan has formed, the density profile is smooth, therefore the later 
evolution of the impurity is fully determined by the 
local density surrounding the impurity.
Let's enumerate the different possibilities that can occur, $\rho$ being 
the density surrounding the impurity:
\begin{itemize}[nosep,
  align=left,
  leftmargin=0pt,
  labelwidth=1.25em,
  itemindent=1.25em,
  labelsep=0pt]
\item[$\blacktriangleright$] If $\alpha>1$ and the impurity is to the right of all the 
particles, it may happen that no particle will be able to overcome it. In 
this  we say that the impurity \emph{escapes the fan from the right} an it will move with speed $v_\ast=\alpha$.
\item[$\blacktriangleright$] If $\beta>1$ and the impurity is to the left of all the 
empty sites, it may happen that impurity will not be able to overcome any empty site. In 
this  we say that the impurity \emph{escapes the fan from the left} an it will move with speed $v_\ast=-\beta$.
\item[$\blacktriangleright$] $\rho$ in region (b). This is possible only if $\beta<1$. In this 
case 
the impurity moves faster than the surrounding density, causing the latter to gradually decrease over time until it matches $\beta$. At this point, the impurity's speed becomes equal to that of the surrounding density patch, i.e. $v_\ast = 1 - 2\beta$.
\item[$\blacktriangleright$] $\rho$ in region (c). This is possible only if $\alpha<1$. In this case the impurity moves slower than the surrounding density, causing the latter to gradually decrease over time
until it matches $1-\alpha$. At that moment the impurity's speed  becomes equal to the speed of the surrounding density patch, i.e. $v_\ast = 2\alpha-1$.
\item[$\blacktriangleright$] $\rho$ in region (a) and $\rho\neq 0,1$: 
the impurity moves in tandem with the density patch $\rho$ at a speed of $v_\ast = 1 - 2\rho$.
\end{itemize}
In conclusion, for the asymptotic speed of the impurity we have
the following possibilities
\begin{equation}
\begin{split}
&v_\ast \in [\max(1-2\beta,-1), \min(2\alpha-1,1)]\\
 &v_\ast=\alpha\quad  \text{possible only if } \alpha>1\\
&v_\ast=-\beta\quad  \text{possible only if } \beta>1.
\end{split}
\end{equation} 
The initial configuration doesn't allow for a definitive prediction of the outcome; each possibility will occur with a certain probability, which strongly depends on the exact initial position of the impurity.
In Section \ref{sect:esc} we will compute, $P_R$ and $P_L$, which are the probabilities of the impurity escaping to the right and left, respectively. For the case where the initial configuration has all the sites to the left of the impurity occupied and all those to the right empty, we obtain:
\begin{equation}\label{esc-prob}
\begin{split}
P_R&=\frac{\alpha-1}{\alpha+\beta-1}\\
P_L&=\frac{\beta-1}{\alpha+\beta-1}.
\end{split}
\end{equation}
Results for generic position of the impurity are found in Section
\ref{sect:esc}.
%
%MODIFICARE
% 
%In Section \ref{sect:speed-fan} we shall show that
%in the case where the initial configuration has all the sites to the left of the impurity occupied and all those to the right empty, 
%under the further assumption that both $\alpha$ and $\beta1$ are smaller than $1$ the asymptotic speed of the impurity is
%uniformly distributed in the interval $[1-2\beta,2\alpha-1]$. 
%Based on extensive simulations, we conjecture that for generic 
%generic values of $\alpha$ and $\beta$, conditioned to the impurity not escaping neither to the left nor to the right, this continues to hold.
%
%
%

\subsubsection*{Case $\alpha+\beta< 1$}

%We know that the rarefaction fan has a decreasing profile. 
Since $ \beta < 1-\alpha $, only the regions (b), (c) and (d) of the 
density are meaningful. If the impurity finds itself in the region (b), 
then it moves at a speed greater than the speed of the surrounding 
density profile and it will eventually reach
the region (d).
Similarly if the impurity finds itself in region (c), then it moves at a 
speed smaller than the speed of the surrounding density profile and it 
will eventually reach the region (d) as well.
After getting  in region (d), the particle will move at a constant speed $ v_\ast= \alpha - \beta $ and it will form of an anti--shock, the density to the right and to the left of the impurity being respectively equal to $\beta$  and $1-\alpha$.
In the long run, this anti--shock combines with the fan to form 
a density profile described by $\rho(x,t)=u(x/t)$, with
\begin{equation}
u(v) = 
\left\{
\begin{array}{ll}
1  &   v < -1 \\

\frac{1}{2}(1-v)
 &  -1 \leq v  \leq ( 2\alpha - 1) ~~~\text{or}~~~ (1-2\beta) \leq v  \leq 1\\

1-\alpha  & (2\alpha - 1) \leq v < \alpha  - \beta    \\

\beta  &  \alpha  - \beta < v \leq (1-2\beta )  \\

0  &     v >1

\end{array}
\right.
\end{equation}
The late time density profile $\rho(x,t)=u(x/t)$ is reported in Figure \ref{fig-late-imp}.

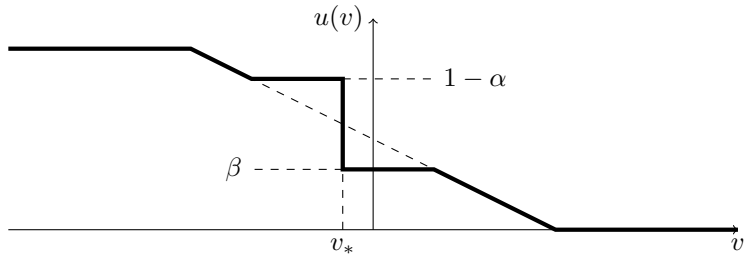
\begin{figure}[H]
\centering
\begin{tikzpicture}[scale = 0.8]
\draw [->] (-5,0)--(7,0) node [below] {$v$};
\draw [->] (1,0)--++(0,3.5) node [left] {$u(v)$};
%\draw [->] (-4,1.5)--++(5,0);
\draw [ultra thick] (-5,3)--++(3,0)--++(1,-.5)--++(1.5,0)--++(0,-1.5)--++(1.5,0)--++(2,-1)--++(3,0);
\draw [dashed] (-2,3)--++(6,-3);
\draw [dashed] (.5,0) node [below] {$v_\ast$}--++(0,2);
\draw [dashed] (-5,3)++(3,0)++(1,-.5)++(1.5,0)--++(1.5,0) node [right] {$1-\alpha$};
\draw [dashed] (-5,3)++(3,0)++(1,-.5)++(1.5,0)++(0,-1.5)--++(-1.5,0) node [left] {$\beta$};
\end{tikzpicture}
\caption{Late time density profile $\rho(x,t)=u(x/t)$ in the case $\alpha+\beta<1$, for an impurity initially at the interface between a region of density equal to $1$ (left) and a region of density equal to $0$ (right).}\label{fig-late-imp}
\end{figure}

\subsection{General $\rho_L-\rho_R$ step initial density profile}

We now consider the general scenario where the initial condition consists 
of two adjacent regions of uniform density, separated by an impurity.
The density to the right of the impurity is  $\rho_L$, 
while the density to the left is $\rho_R$.
In the previous section, we have seen that if 
 $\alpha>1$ and the sites to the left are all empty, the impurity can 
 escape to the right.  Similarly, if $\beta>1$ and the sites to the right 
 are all occupied, the impurity can escape to the left.
For the remainder of this discussion, we will assume
$ 0 < \rho_L$  and  $\rho_R < 1$. This assumption ensures that we avoid the escaping impurity phenomenon, which has already been analyzed.
We treat separately the cases where the left density $\rho_R$ is larger or smaller than the the right density $\rho_L$.

\subsubsection{Impurity in the rarefaction fan: $ \rho_L>\rho_R$}
\label{sect:nu-gret-mu}

Let's consider the case for which $\rho_L>\rho_R$. In the absence of the impurity, this situation leads to the formation of a simple rarefaction fan. The final result of our analysis is summarized in the diagram in Figure \ref{zoo-fan}.
For the analysis, we again  distinguish the two sub--cases based on 
whether $\alpha + \beta$ is larger or smaller than $1$.

\subsubsection*{Case $\alpha+\beta\geq 1$}

%$\boxed{\alpha + \beta > 1}$
In this case, the particle will not disturb the density 
profile. Its asymptotic speed can be either deterministic or random, 
belonging to a specific interval.
 By an analysis similar to the one we performed
for the case $1-0$, we find that the asymptotic speed of the particle 
depends on the region to which $\rho_L$ and $\rho_R$ belong--either (a), 
(b), or (c).

If both  $\rho_R$ and $\rho_L$ belong to  region (c) (i.e. $1-\alpha>\rho_L>\rho_R$), 
then the impurity moves slower than the left edge of the fan, therefore it will end up in a region with density $\rho_L$ and have a speed equal to  $\alpha - \rho_L$.
By a similar reasoning we conclude that if 
both $\rho_R$ and $\rho_L$ belong to region (b) (i.e. $\rho_L>\rho_R>\beta$), then 
the impurity moves faster than the right edge of the fan, and therefore it 
will end up in a region with density $\rho_R$ and have a speed equal 
to  $1-\beta - \rho_R$.
In the remaining cases, if the impurity finds 
itself in either region (c) or region (b), it will move respectively 
slower or faster than the fan and eventually end up within the fan. 
Once there, its speed coincides with the speed of the characteristic at 
density $\rho \in [\rho_R,\rho_L] \cap [1-\alpha,\beta]$. Since $v_\ast=1-2\rho$, we obtain:
\begin{equation}
v_\ast \in [\max(1-2\rho_L, 1-2\beta) , \min(2\alpha -1, 1-2\rho_R)].
\end{equation}
%Numerical evidence suggest that in general, contrary to the  particular 
%case of an impurity starting exactly at the interface between a fully occupied and a fully empty region, the impurity speed is not uniformly distributed in the admissible interval see Figure \ref{nonuniform}. 
%We leave the determination of these distribution for further investigations.

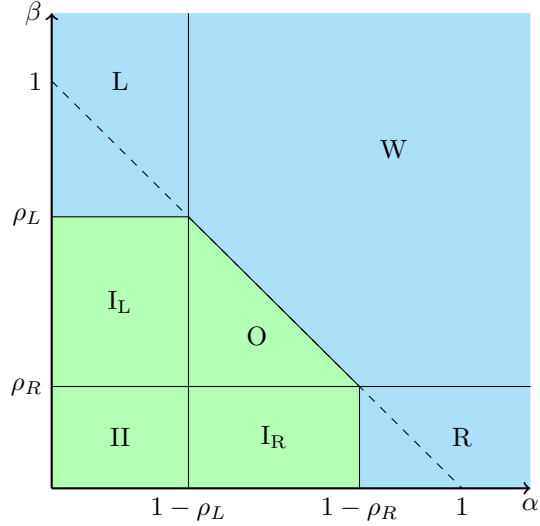
\begin{figure}[h!]
\centering
\begin{tikzpicture}[scale = 0.9]
\fill[cyan!30] (0,7)--(0,4) node [black, left] {$\rho_L$}--(2,4)--(4.5,1.5)--(4.5,0) node[black, below] {$1-\rho_R$}--(7,0)--(7,7)--cycle;
%\fill[cyan!60] (2,7)--(2,4)--(4.5,1.5)--(7,1.5)--(7,7)--cycle;
\fill[green!30] (0,4)--(2,4)--(4.5,1.5)--(4.5,0)--(0,0)--cycle;
\draw[thick,->] (0,0) -- (7,0) node[below] {$\alpha$};
\draw[thick,->] (0,0) -- (0,7) node[left] {$\beta$};
\draw[dashed] (0,6) node[left] {$1$}-- (6,0)node[below] {$1$}; 
\draw (2,0) node[below] {$1-\rho_L$} --(2,7);
\draw (0,1.5) node[left] {$\rho_R $} --(7,1.5);
\draw (0,4)--(2,4)--(4.5,1.5)--(4.5,0);
\draw (5,5) node {W};
\draw (1,6) node {L};
\draw (6,.75) node {R};
\draw (3,2.25) node {O};
\draw (1,2.75) node {I$_\textrm{L}$};
\draw (3.25,.75) node {I$_\textrm{R}$};
\draw (1,.75) node {II};

\end{tikzpicture}

\caption{
The diagram describes various potential outcomes for the case $\rho_L>\rho_R$.
The cyan region indicates a fan-like density profile unaltered by the impurity's presence. 
In region 
(W), the impurity is located within the fan, with a speed that takes a random value in the interval  $[\max(1-2\rho_L, 1-2\beta) 
, \min(2\alpha -1, 1-2\rho_R)]$. In region (L) the impurity is located 
to the left of the fan, moving with speed $\alpha-\rho_L$, while in region 
(R) the impurity is located to the right of the fan, moving  with 
speed $1-\beta-\rho_R$. 
In the green region, an anti-shock forms at the impurity's location, moving at speed $\alpha-\beta$. 
Specifically, region (O) features the impurity inducing an anti-shock. Region (I$_\textrm{L}$)
sees the anti-shock preceded by a shock, while in region (I$_\textrm{R}$)
it is followed by a shock. 
Region (II) has the anti-shock both preceded and followed by shocks. The results for the green region are exemplified in the simulation shown in Figure \ref{fig:munu1}.
}\label{zoo-fan}
\end{figure}

\subsubsection*{Case $\alpha+\beta < 1$}

In the case $\alpha+\beta < 1$, the presence of the impurity 
may alter the density profile.
Both the density profile and the asymptotic speed of the impurity depend on which regions $\rho_L$ and $\rho_R$ belong to. 
If both  $\rho_R$ and $\rho_L$ belong to  region (c) (in this case $
\beta>\rho_L>\rho_R$), the impurity does not alter the density profile. It 
moves slower than the left edge of the fan, 
eventually settling in the region with density
$\rho_L$ and moving at speed $\alpha - \rho_L$.
Similarly, if 
both $\rho_R$ and $\rho_L$ belong to region (b) (in this case $\rho_L>\rho_R>1-
\alpha$), the impurity again does not modify the density profile. 
It moves faster than the right edge of the fan, 
eventually settling in the region with density $\rho_R$ and moving at speed  $1-\beta - \rho_R$.
In the remaining cases, regardless of the density surrounding the 
impurity at a given time, the impurity will ultimately end up in region 
(d), leading to the formation of an anti-shock at the impurity's 
location, moving at speed 
$\alpha-\beta$. 
The density to the left of the impurity is $1-\alpha$, so if
$\rho_L<1-\alpha$, the regions at density $\rho_L$ and $1-\alpha$ are separated by a shock; otherwise they are separated by a fan.
Similarly, the density to the right of the impurity is $\beta$, so if
$\rho_R>\beta$ the regions at density $\rho_R$ and $\beta$ are separated by a shock; otherwise they are separated by a fan.
All this possibilities are exemplified in the simulations shown in Figure \ref{fig:munu1d}.

\begin{figure}[h!]
	\centering
	\begin{subfigure}[b]{0.4\linewidth}
		\includegraphics[width=\linewidth]{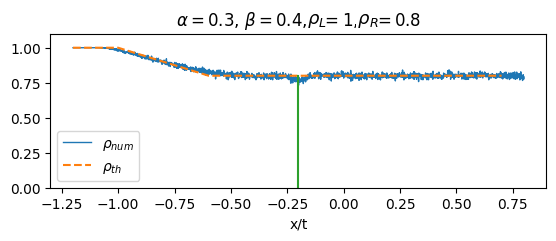}
		\caption{}
		\label{fig:munu1a}
	\end{subfigure}
	\begin{subfigure}[b]{0.4\linewidth}
		\includegraphics[width=\linewidth]{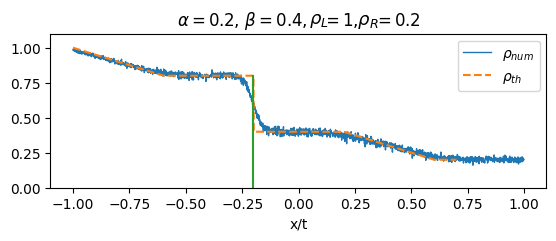}
		\caption{}
		\label{fig:munu1b}
	\end{subfigure}
	\begin{subfigure}[b]{0.4\linewidth}
		\includegraphics[width=\linewidth]{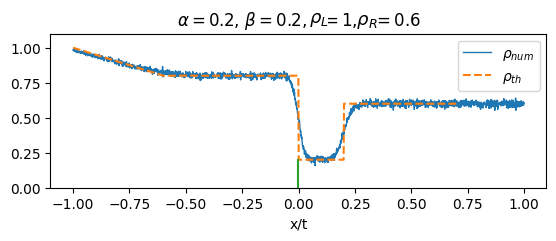}
		\caption{}
		\label{fig:munu1c}
	\end{subfigure}
	\begin{subfigure}[b]{0.4\linewidth}
		\includegraphics[width=\linewidth]{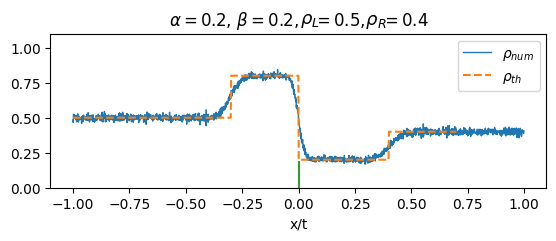}
		\caption{}
		\label{fig:munu1d}
	\end{subfigure}
\caption{Examples of the phenomenology of the density field in the case 
of $\rho_L > \rho_R$ and $\alpha+\beta < 1$. The dashed orange line represents 
the theoretical profile of the density. The blue line represents the 
numerically simulated density averaged for $1000$ realizations run up to 
time $t=1000$. The vertical green line is the simulated position of the 
impurity. In  \ref{fig:munu1a}, the parameters belong to region 
(R) of  \ref{zoo-fan}. Here, we observe the impurity situated to the right of the fan. 
In  \ref{fig:munu1b}, the parameters are in region 
(O), showing an anti-shock within the fan.
In \ref{fig:munu1c}, the parameters belong to region 
(I$_\text{R}$), with an anti-shock followed by a shock. 
Finally, in  \ref{fig:munu1d}, the parameters are in region 
(II), where we see an anti-shock both preceded and followed by a shock.
}
	\label{fig:munu1}
\end{figure}

\subsubsection{Impurity in the shock:  $\rho_L<\rho_R$ }
\label{sect:nu-smaller-mu}

In this case, in absence of an impurity, the density profile would present a shock.
If an impurity is present, there are four different possible scenarios.
Either the impurity moves at constant speed, faster or slower than the 
shock. A third possibility is that the impurity gets stuck at the 
position of the shock without modifying it. The final possibility is that 
the impurity creates an anti--shock possibly followed and/or preceded by shocks.
Let's recall that the speed of the shock is $v_s=1-\rho_L-\rho_R$ and that $v_\ast$ is a weakly decreasing function of the density, which implies that  $v_\ast(\rho_L) \geq v_\ast(\rho_R)$.
We will distinguish four possibilities:
\begin{itemize}[nosep,
  align=left,
  leftmargin=0pt,
  labelwidth=1.25em,
  itemindent=1.25em,
  labelsep=0pt]
\item[$\blacktriangleright$] $\alpha>1-\rho_R$ and $\beta<\rho_L$, then by direct computation we have  
$v_\ast(\rho_R)> v_s$, 
which means that in any case the impurity moves faster than the shock and 
in particular eventually it will be located to its right with speed 
equal to $v_\ast(\rho_R)=1-\beta-\rho_R$.
\item[$\blacktriangleright$] $\alpha<1-\rho_R$ and $\beta>\rho_L$. In this case then $v_\ast(\rho_L)< v_s$, which means that the 
impurity moves slower than the shock and eventually will be located is 
located to its left with a speed equal to 
$v_\ast(\rho_L)=\alpha-\rho_L$.
\item[$\blacktriangleright$] $\rho_L>1-\alpha>\beta>\rho_R$. In this case we have $v_\ast(\rho_L)> v_s>v_\ast(\rho_R)$.
This implies that if the impurity is initially located to the left of the 
shock, it will eventually catch up to the shock since it moves faster. 
Conversely, if the impurity is initially located to the right of the 
shock, the shock will eventually catch up to the impurity since the 
impurity moves slower. It remains to address whether the presence of the 
impurity deforms the shock profile. We already know that deformation 
cannot occur when $\alpha+\beta>1$, but this also holds true even if
$\alpha+\beta<1$. 
Indeed, the only possible deformation would be an anti--shock between 
densities  $1-\alpha$ and $\beta$, preceded by a shock between densities $\rho_L$ and $1-\alpha$, and followed by another shock between densities 
$\beta$ and $\rho_RL$. 
However, this is not feasible since the shock to the left has a speed greater than that of the anti-shock, while the shock to the right has a speed smaller than that of the anti-shock.
In conclusion,  the impurity in this case is located at the position of the shock and thus has a speed equal to  $v_s=1-\rho_L-\rho_R$.
\item[$\blacktriangleright$] $\beta<\rho_L<\rho_R<1-\alpha$. In this case the density surrounding the impurity falls into region (d), which means that 
the density profile consists of an  
  anti--shock between 
densities  $1-\alpha$ and $\beta$, preceded by a shock between densities $\rho_L$ and $1-\alpha$, and followed by another shock between densities 
$\beta$ and $\rho_R$. Contrary to the previous scenario, in this case, the speeds of the shocks and the anti-shock are correctly ordered. The impurity is located at the anti-shock and has a speed of $\alpha-\beta$.
\end{itemize}

The four scenarios discussed above are summarized in Figure \ref{fig-zoo-shock}, while examples of simulations are reported in Figure \ref{shocks}.

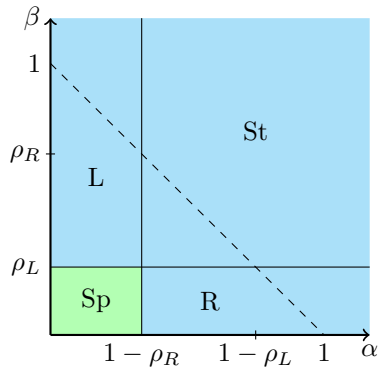
\begin{figure}
\centering
\begin{tikzpicture}[scale = 0.6]
\fill[cyan!30] (0,7)--(0,1.5) --(2,1.5)--(2,0)--(7,0)--(7,7)--cycle;
%%\fill[cyan!60] (2,7)--(2,4)--(4.5,1.5)--(7,1.5)--(7,7)--cycle;
\fill[green!30] (0,1.5)--(2,1.5)--(2,0)--(0,0)--cycle;
\draw[thick,->] (0,0) -- (7,0) node[below] {$\alpha$};
\draw[thick,->] (0,0) -- (0,7) node[left] {$\beta$};
\draw[dashed] (0,6) node[left] {$1$}-- (6,0)node[below] {$1$}; 
\draw (2,0) node[below] {$1-\rho_R$} --(2,7);
\draw (0,1.5) node[left] {$\rho_L $} --(7,1.5);
%\draw (0,4)--(2,4)--(4.5,1.5)--(4.5,0);
\draw (0,4) node [black, left] {$\rho_R$} (4.5,0) node[black, below] {$1-\rho_L$} (4.5,4.5) node {St} (3.5,0.75) node {R} (1,3.5) node {L} 
 (1,.75) node {Sp};
\draw (-.1,4)--++(.2,0) (4.5,-.1)--++(0,.2);
\end{tikzpicture}
\caption{ The diagram describes various potential outcomes for the case $\rho_L<\rho_R$.
The cyan region indicates a shock density profile unaltered by the impurity's presence.
In region (St), the impurity is located at the shock (see Figure \ref{shocksa} for a simulation).
In region (L) the impurity is located  
to the left of the shock and moves with speed $\alpha-\rho_L$, smaller than 
the shock speed (see Figure \ref{shocksc} for a simulation).
In region (R) the impurity is located to the right of the shock and 
moves with speed $1-\beta-\rho_R$, larger than the shock speed (see Figure \ref{shocksb} for a simulation).
Finally, in the green region (Sp)
the shock splits into two shocks separated by an anti--shock 
located at the impurity position and moves at a speed $\alpha - \beta$. The shock on its right moves at a speed $1-\rho_R-\beta$. The shock on the left has a speed $\alpha - \rho_L$ (see Figure \ref{shocksd} for a simulation). 
}\label{fig-zoo-shock}
\end{figure}

\begin{figure}[h!]
	\centering
	\begin{subfigure}[b]{0.4\linewidth}
		\includegraphics[width=\linewidth]{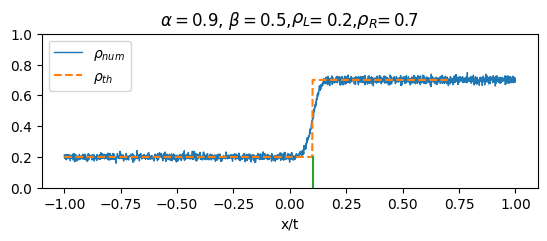}
		\caption{St}
		\label{shocksa}
	\end{subfigure}
	\begin{subfigure}[b]{0.4\linewidth}
		\includegraphics[width=\linewidth]{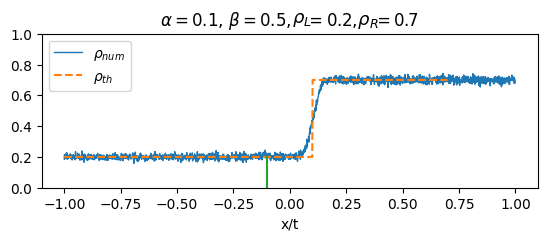}
		\caption{L}
		\label{shocksb}
	\end{subfigure}
	\begin{subfigure}[b]{0.4\linewidth}
		\includegraphics[width=\linewidth]{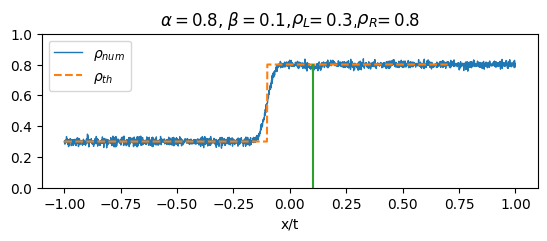}
		\caption{R}
		\label{shocksc}
	\end{subfigure}
	\begin{subfigure}[b]{0.4\linewidth}
		\includegraphics[width=\linewidth]{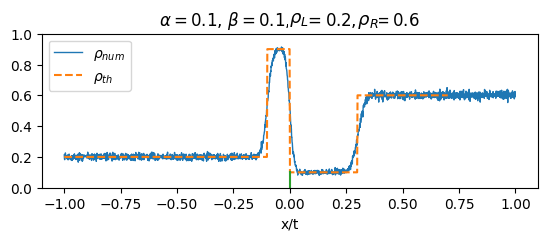}
		\caption{Sp}
		\label{shocksd}
	\end{subfigure}
	\caption{Examples of the phenomenology in the case of $\rho_L < \rho_R$ and $\alpha+\beta < 1$.The dashed orange line represents the theoretical profile of the density. The blue line represents the numerically simulated density averaged for $1000$ realizations run up to 
time $t=1000$. The vertical green line is the simulated position of the impurity particle. The acronyms of the sub-figures refer to Figure \ref{fig-zoo-shock}.}
	\label{shocks}
\end{figure}

\section{Behavior of an impurity}
\label{Asymptotic}

In the previous section we mainly made an analysis of the behavior of
the density profile in  a system with a  
single impurity, which was based on the assumption of the validity of the hydrodynamic limit of this model. In the present section we want to present some rigorous results on the  behavior of the impurity.

To that end, we reframe and extend an approach introduced by Ferrari and 
Pimentel \cite{ferrari2005competition} for representing a second--class 
particle through a hole--particle pair. 
In the following section, as a first step, we formulate this 
representation for an arbitrary deterministic evolution of a particle 
system with an impurity, provided that it satisfies some minimal assumptions that will be 
specified. 
%
%
%
%independently of the underlying random dynamics. 
%
%In the following section,
%as a first step, we formulate this representation regardless of the 
%underlying random dynamic. In other words, it is formulated for an arbitrary deterministic evolution, provided that it satisfies some minimal assumptions that will be stated. 
In that regard, no distinction is necessary between a second--class particle and an impurity in this framework. 	
In Section \ref{section:dynamics} we "turn on" the stochastic dynamic, and in order to account for our problem, we will need to introduce new constructions that indeed modify the original approach of Ferrari and 
Pimentel.

\subsection{Pair representation of the impurity}

%In this section, we revisit the representation of a single second--class 
%particle as a \emph{hole--particle pair} within an auxiliary particle system, 
%following the approach introduced by Ferrari and Pimentel 
%\cite{ferrari2005competition}. 
%\textcolor{red}{To adapt the Ferrari and Pimentel approach to our 
%problem, we divide it into two parts. First, in this section, we 
%describe the correspondence between the time evolution of a system with 
%an impurity and that of a system with a hole–particle pair. This 
%correspondence applies to a broad class of processes with only minimal 
%assumptions on their dynamics. In the next section, we will focus on 
%selecting the appropriate dynamics for these processes, particularly 
%concerning the system without the  impurity. }

The type of time evolution we consider for a totally asymmetric particle system with exclusion 
and a single impurity is described by a function
 $\eta_*(t,i)$,  $\eta_*:\mathbb{R}^+\times \mathbb{Z}\rightarrow 
\{\circ,\bullet,\ast\}$, which has the following properties:
\begin{itemize}
\item [i.] There exists a unique position $y(t)\in \mathbb{Z}$ such that $\eta_*(t,y(t))=\ast$, representing the location of the impurity.
\item [ii.]  As a function of the time variable $t$,  $\eta_*(t,i)$ is right--continuous with left limits (c\`adl\`ag).
\item[iii.] At any discontinuity point $\tilde{t}$, there is a unique
$j\in \mathbb{Z} $ such that the following conditions hold:
\begin{equation}
\begin{aligned}
\eta_\ast(\tilde{t}^+,i)&=\eta_\ast(\tilde{t},i)& i&\neq j,j+1\\
\eta_\ast(\tilde{t}^+,j+1)&=\eta_\ast(\tilde{t},j) && \\
\eta_\ast(\tilde{t}^+,j)&=\eta_\ast(\tilde{t},j+1).&&  
\end{aligned}
\end{equation}
where  $\eta_\ast(\tilde{t},j)>\eta_\ast(\tilde{t},j+1)$ under the ordering $\circ<\ast< \bullet$.
\end{itemize}
Property (i) specifies that there is exactly one site at each time 
$t$ where the impurity is located. 
Property (iii) explicitly states that any discontinuity point corresponds 
to an exchange between two neighboring particles, consistent with the 
rules of TASEP with an impurity. Specifically:
$$
\bullet \circ \longrightarrow \circ \bullet\qquad
\bullet \ast \longrightarrow \ast \bullet\qquad
\ast \circ \longrightarrow \circ \ast
$$ 
To associate a particle evolution $\eta_\ast$ with a single impurity to a 
particle evolution $\eta$ without any impurity, we "split" the impurity $
\ast$ into a \emph{hole--particle pair} $\circ \bullet$. This 
construction is expressed mathematically as follows: for a given 
configuration $\eta_\ast(t,\cdot): \mathbb{Z} \rightarrow\{\circ, 
\bullet,\ast\}$ at time 
$t$, define a corresponding configuration $\eta(t,\cdot):\mathbb{Z} 
\rightarrow\{\circ, \bullet\}$  for the system without an impurity.
At any time $t$, identify the unique position $y(t)\in\mathbb{Z}$ such 
that $\eta_\ast(t,y(t))=\ast$. Then, replace the impurity $\ast$ 
at position $y(t)$ with a pair consisting of a hole $\circ$ immediately 
followed by a particle $\bullet$. Formally, define the mapping:
\begin{equation}
\begin{aligned}
\eta(t,y(t))&=\circ  &&\\
\eta(t,y(t)+1)&=\bullet &&\\   
\eta(t,i)&=\eta_\ast(t,i) &i&<y(t)\\
\eta(t,i)&=\eta_\ast(t,i-1)& i&>y(t)+1
\end{aligned}
\end{equation}
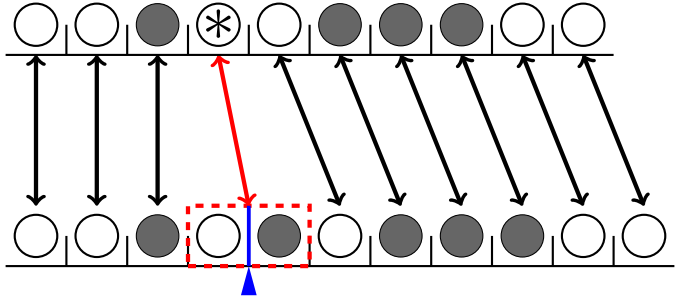
\begin{figure}
\begin{center}
\begin{tikzpicture}[scale = 0.4]
\draw [ultra thick, red,<->]  (7,0)-- (8,-5);
\foreach \t in {1,3,5}
{
\draw [ultra thick, <->]  (\t,0)-- (\t,-5);
}
\foreach \t in {9,11,...,19}
{
\draw [ultra thick, <->]  (\t,0)--++ (2,-5);
}
\begin{scope}[xshift=0cm]
\draw[thick] (0,0) -- (20,0);
\foreach \t in {2,4,...,18}
{
\draw [thick] (\t,0)--++(0,1);
}
\draw [fill=black!60]  (5,1) circle (.7)  (13,1) circle (.7) (15,1) circle (.7) (11,1) circle (.7);
\draw (7,1) node {\Huge $\ast$};
\draw [thick]  (1,1) circle (.7) (3,1) circle (.7) (7,1) circle (.7) (9,1) circle (.7) (19,1) circle (.7) (17,1) circle (.7);
%\draw [ultra thick, dashed, red] (6,0)--++(4,0)--++(0,2)--++(-4,0)--cycle; 
%\draw [line width=0.5mm, blue] (8,0)--++(0,2);
%\fill [blue] (8,0)++(-75:1)--++(-75:-1)--++(-105:1)--cycle;
\end{scope}
\begin{scope}[yshift= -7cm]
\draw[thick] (0,0) -- (22,0);
\foreach \t in {2,4,...,20}
{
\draw [thick] (\t,0)--++(0,1);
}
\draw [fill=black!60]  (5,1) circle (.7) (9,1) circle (.7) (13,1) circle (.7) (15,1) circle (.7) (17,1) circle (.7);
\draw [thick]  (1,1) circle (.7) (3,1) circle (.7) (7,1) circle (.7) (11,1) circle (.7) (19,1) circle (.7) (21,1) circle (.7);
\draw [ultra thick, dashed, red] (6,0)--++(4,0)--++(0,2)--++(-4,0)--cycle; 
\draw [line width=0.5mm, blue] (8,0)--++(0,2);
\fill [blue] (8,0)++(-75:1)--++(-75:-1)--++(-105:1)--cycle;
\end{scope}
\end{tikzpicture}
\end{center}
\caption{This picture illustrates 
the correspondence between an impurity and a hole--particle pair. In the top 
configuration, the impurity is represented by a {\Large $\circledast$}. In 
the bottom configuration, this impurity is decomposed into a hole--particle 
pair, which we highlight with a dashed red box. The midpoint of the 
hole--particle pair is marked with a blue triangle.}
\end{figure}
The mapping from $\eta_\ast$ to $\eta$ is well--defined 
regardless of conditions (ii) and (iii) above. 
The purpose of these conditions is to ensure that the mapping can be 
inverted, enabling analysis of the system's behavior using tools and 
methods applicable to the standard exclusion process without impurities.
%By converting the impurity into a pair of basic elements $(\circ \!\!\
%\bullet)$, we can analyze the behavior of the system using tools and 
%methods applicable to the ordinary exclusion process without impurities. 
%For this approach to be effective, we must be able to reconstruct 
%$\eta_\ast$ staring from $\eta$. 
In order to invert the mapping we need just to  know the 
position of the hole-particle pair at some time $t_0$.
Specifically, let $y(t_0)$ and $y(t_0)+1$ be the positions of the hole 
and the particle in the pair at time $t_0$.  Thanks to the properties of 
$\eta$,  we can track the pair along its time evolution.
To do this, we define the c\`adl\`ag function $x_\eta: \mathbb{R}^+ 
\rightarrow \mathbb{Z}+\frac{1}{2}$ whose value represents the "middle point" 
position of the hole--particle pair.  
\begin{itemize}
\item [$\blacktriangleright$] $x_\eta(t_0)=y(t_0)+1/2$;
\item [$\blacktriangleright$] $\eta(t,x_\eta(t)-1/2)=\circ$ and $\eta(t,x_\eta(t)+1/2)=\bullet$;
\item [$\blacktriangleright$] at the discontinuity points $\tilde{t}$ we have $|x_\eta(\tilde{t})-x_\eta(\tilde{t}^+)|=1 $.
\end{itemize}
What happens is that the value of $x_\eta$ increases  by $1$ whenever 
the particle in the pair jumps to the right, and it decreases by  $1$ 
whenever another particle fills the 
hole in the pair.
%, and it decreases by  $1$ if the reverse occurs.
To reconstruct the particle evolution with an impurity, $\eta_\ast$,  
from $\eta$ and the initial position $x_\eta(t_0)$, 
we simply "contract" the hole-particle pair back into a single impurity:
\begin{equation}
\begin{aligned}
\eta_\ast(t,x_\eta(t)-1/2)&= \ast &&&\\
\eta_\ast(t,i)&= \eta(t,i)&&\text{for}&  i&<x_\eta(t)-1/2\\
\eta_\ast(t,i)&= \eta(t,i+1)&&\text{for} & i&\geq x_\eta(t)
+1/2
\end{aligned}
\end{equation}
In particular, this gives us  $y(t)=x_\eta(t)-1/2$. 
Thus, if we are %only 
interested in the movement of the impurity $\ast$,
we can focus on the dynamics of $\eta$ and, more specifically, on the 
evolution of  $x_\eta(t)$. 
\begin{figure}
\begin{center}
\begin{tikzpicture}[scale = 0.3]
\begin{scope}
\draw [ultra thick,->] (8,-2)--++(0,-1.5) node[right] {(a)} --++(0,-1.5);
%\draw [ultra thick,->] (10,-1)--++(2,-2) node[above right] {(b)} --++(2,-2);
\draw[thick] (0,0) -- (16,0);
\foreach \t in {2,4,...,14}
{
\draw [thick] (\t,0)--++(0,1);
}
\draw [ultra thick, ->]  (9,2.1) to [in=135,out=45] (11,2.1);
\draw [fill=black!60]  (5,1) circle (.7) (9,1) circle (.7) (15,1) circle (.7);
\draw [ultra thick]  (1,1) circle (.7) (3,1) circle (.7) (7,1) circle (.7) (11,1) circle (.7) (13,1) circle (.7);
\draw [ultra thick, dashed, red] (6,0)--++(4,0)--++(0,2)--++(-4,0)--cycle; 
\draw [line width=0.5mm, blue] (8,0)--++(0,2);
\fill [blue] (8,0)++(-75:1)--++(-75:-1)--++(-105:1)--cycle;
\end{scope}
\begin{scope}[xshift= 22cm]
\draw [ultra thick,->] (8,-2)--++(0,-1.5) node[right] {(b)} --++(0,-1.5);
%\draw [ultra thick,->] (10,-1)--++(2,-2) node[above right] {(b)} --++(2,-2);
\draw[thick] (0,0) -- (16,0);
\foreach \t in {2,4,...,14}
{
\draw [thick] (\t,0)--++(0,1);
}
\draw [ultra thick, ->]  (5,2.1) to [in=135,out=45] (7,2.2);
\draw [fill=black!60]  (5,1) circle (.7) (9,1) circle (.7) (15,1) circle (.7);
\draw [ultra thick]  (1,1) circle (.7) (3,1) circle (.7) (7,1) circle (.7) (11,1) circle (.7) (13,1) circle (.7);
\draw [ultra thick, dashed, red] (6,0)--++(4,0)--++(0,2)--++(-4,0)--cycle; 
\draw [line width=0.5mm, blue] (8,0)--++(0,2);
\fill [blue] (8,0)++(-75:1)--++(-75:-1)--++(-105:1)--cycle;
\end{scope}
\begin{scope}[yshift= -8cm]
\draw[thick] (0,0) -- (16,0);
\foreach \t in {2,4,...,14}
{
\draw [thick] (\t,0)--++(0,2);
}
\draw [fill=black!60]  (5,1) circle (.7) (11,1) circle (.7) (15,1) circle (.7);
\draw [ultra thick]  (1,1) circle (.7) (3,1) circle (.7) (7,1) circle (.7) (9,1) circle (.7) (13,1) circle (.7);
\draw [ultra thick, dashed, red] (8,0)--++(4,0)--++(0,2)--++(-4,0)--cycle; 
\draw [line width=0.5mm, blue] (10,0)--++(0,2);
\fill [blue] (10,0)++(-75:1)--++(-75:-1)--++(-105:1)--cycle;
\end{scope}
\begin{scope}[xshift= 22cm,yshift= -8cm]
\draw[thick] (0,0) -- (16,0);
\foreach \t in {2,4,...,14}
{
\draw [thick] (\t,0)--++(0,2);
}
\draw [fill=black!60]  (7,1) circle (.7) (9,1) circle (.7) (15,1) circle (.7);
\draw [ultra thick]  (1,1) circle (.7) (3,1) circle (.7) (5,1) circle (.7) (11,1) circle (.7) (13,1) circle (.7);
\draw [ultra thick, dashed, red] (4,0)--++(4,0)--++(0,2)--++(-4,0)--cycle; 
\draw [line width=0.5mm, blue] (6,0)--++(0,2);
\fill [blue] (6,0)++(-75:1)--++(-75:-1)--++(-105:1)--cycle;
\end{scope}

\end{tikzpicture}
\end{center}
\caption{Possible evolution of a hole--particle pair: in case (a), the particle within the hole--particle pair moves to the right, causing the pair as a whole to shift rightward. In case (b), a new particle moves into the hole of the hole--particle pair, resulting in a leftward shift of the pair.
}
\end{figure}
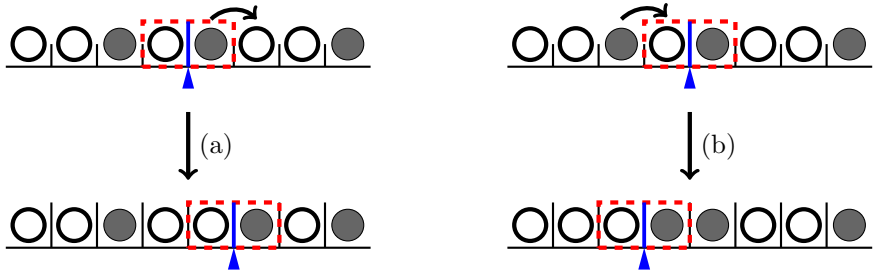
To this end, note that determining $x_\eta(t)$ does not require knowledge 
of the full time evolution $\eta(t,i)$. For example, if we tag a particle
to the right of the hole--particle  pair, the 
displacement of the pair is fully determined by the particle 
configuration to the left of the tagged particle.
Similarly, if we tag a hole
to the left of the hole--particle  pair the 
displacement of the pair is fully determined by the particle 
configuration to the right of the tagged hole.
%
%The position of a tagged particle is given 
%by a function $x^\bullet_\eta: \mathbb{R}^+\rightarrow \mathbb{Z}$ with the following properties
%\begin{itemize}
%\item [$\blacktriangleright$] $\eta(t,x_\eta^\bullet(t))=\bullet$ 
%\item [$\blacktriangleright$]
%\item [$\blacktriangleright$] $x_\eta^\bullet(t^+)-x_\eta^\bullet(t)=0$ or $1$.
%\end{itemize}
%Similarly, the position of a tagged hole is given  
%a function $x_\eta^\circ(t)$ such that  
%\begin{itemize}
%\item [$\blacktriangleright$] $\eta(t,x_\eta^\circ(t))=\circ$ 
%\item [$\blacktriangleright$] $x_\eta^\circ(t^+)-x_\eta^\circ(t)=0$ or $-1$.
%\end{itemize}
Now, what we described in simpler terms is formally stated in the following Lemma, where $x_\eta^\bullet(t)$ and $x_\eta^\circ(t)$ represent respectively the positions (as functions of time) of a tagged particle and a tagged hole within the particle evolution $\eta$.
\begin{lemma}\label{restriction-lemma}
Let $\eta$ and $\eta'$ be two particle systems having a hole--particle  
pair at the same position at a given time $t_0$, i.e. $x_\eta(t_0)=x_{\eta'}(t_0)$.

\noindent \underline{Case 1: Tagged Particle.}
Suppose that both systems, $\eta$ and $\eta'$,  contain a tagged particle.
Additionally these tagged particles are required follow the same trajectory,  i.e. $x_\eta^\bullet(t)=x_{\eta'}^\bullet(t)$, and at time $t_0$ they  are located to the  right of the h--p pair; i.e. $x_\eta^\bullet(t_0)=x_{\eta'}^\bullet(t_0)>x_\eta(t_0)$.
%
%Suppose there is a tagged particle located to the right  
%of the h--p pair's position  at time  $t_0$, i.e.  $x_\eta^\bullet(t_0)> x_\eta(t_0)$. 
Assume further that
\begin{equation}
\eta(t,i) = \eta'(t,i) \quad \text{for}\quad t\geq t_0, ~~ 
 i \leq x_\eta^\bullet(t) .
\end{equation}

\noindent \underline{Case 2: Tagged Hole.}
Suppose that both systems, $\eta$ and $\eta'$,  contain a tagged hole.
Additionally these tagged holes are required to follow the same trajectory,  i.e. $x_\eta^\circ(t)=x_{\eta'}^\circ(t)$, and at time $t_0$ they  are located to the  left of the h--p pair; i.e. $x_\eta^\circ(t_0)=x_{\eta'}^\circ(t_0)<x_\eta(t_0)$.
%Suppose there is a tagged hole located to the left of the 
%h--p pair's position at time $t_0$, i.e. 
% $x_\eta^\circ(t_0)< x_\eta(t_0)$. 
Assume further that
\begin{equation}
\eta(t,i) = \eta'(t,i) \quad \text{for}\quad t\geq t_0, ~~ 
 i \geq x_\eta^\circ(t) .
\end{equation}

\noindent
Then in both cases  we have
\begin{equation}
x_\eta(t) = x_{\eta'}(t)\quad \text{for} \quad t\geq t_0.
\end{equation}
\end{lemma}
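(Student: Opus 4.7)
My plan rests on two observations. First, the position $x_\eta$ is governed by purely local events: from the definition of $x_\eta$ and property (iii) of admissible evolutions, the only exchanges of $\eta$ that change $x_\eta$ are the swap $\bullet \circ \to \circ \bullet$ at sites $(x_\eta(t^-)+1/2, x_\eta(t^-)+3/2)$, which increments $x_\eta$ by one, and the same swap at sites $(x_\eta(t^-)-3/2, x_\eta(t^-)-1/2)$, which decrements it; every other admissible update leaves $x_\eta$ fixed. Hence the trajectory $t\mapsto x_\eta(t)$ is determined solely by the restriction of $\eta$ to the four sites $x_\eta(t)\pm 1/2$ and $x_\eta(t)\pm 3/2$, with the window following the pair.

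Second, exclusion forces the pair to stay on one side of the tagged object. In Case 1, a rightward move of the pair particle requires the destination site $x_\eta(t)+3/2$ to be empty, so it cannot coincide with the tagged particle at $x_\eta^\bullet(t)$; conversely, a jump of the tagged particle also requires an empty destination, so it never lands on the pair particle. The inequality $x_\eta(t_0)+1/2 < x_\eta^\bullet(t_0)$ is therefore preserved at every discontinuity, and one obtains $x_\eta(t)+3/2 \leq x_\eta^\bullet(t)$ for all $t\geq t_0$. The symmetric argument in Case 2 gives $x_\eta(t)-3/2 \geq x_\eta^\circ(t)$. Combined with the local characterization above, this means that the four sites governing the pair's motion all lie in the region $i\leq x_\eta^\bullet(t)$ (resp.\ $i\geq x_\eta^\circ(t)$) where $\eta$ and $\eta'$ are assumed to agree.

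With these ingredients in hand, the proof reduces to an induction on the discontinuity times of $\eta$ (equivalently, $\eta'$) within the moving window. Because the tagged trajectories in the two systems are assumed to coincide, the "common window" of agreement has the same extent in both. Starting from $x_\eta(t_0)=x_{\eta'}(t_0)$, suppose inductively that $x_\eta(s)=x_{\eta'}(s)$ for all $s\leq t$. The windows then coincide at every earlier time, and by the first observation any exchange of $\eta$ that changes $x_\eta$ at time $t$ involves sites in the common-agreement region, so the matching exchange of $\eta'$ occurs at the same sites with the same effect. Since only finitely many exchanges can affect a bounded spatial region in a bounded time interval (by the c\`adl\`ag hypothesis), the induction covers all $t\geq t_0$.

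The delicate point is the apparent circularity: comparing updates in the two systems requires the windows to coincide, which requires the pair positions to have already coincided. The chronological induction resolves this cleanly, but it depends crucially on the confinement established in the second step: without it, the window could drift across the tagged particle (or tagged hole) into a region where $\eta$ and $\eta'$ are no longer known to agree, and the bootstrap would fail. This confinement is the only substantive input specific to the lemma; the rest of the argument is a routine reading off of the local rule for $x_\eta$.
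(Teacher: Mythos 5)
Your overall strategy (locality of the pair dynamics, confinement behind the tagged object, then a first--divergence/induction argument) is the same as the paper's, but there is a genuine gap in your confinement step. You start from ``$x_\eta(t_0)+1/2 < x_{\eta}^\bullet(t_0)$'', which is strictly stronger than the lemma's hypothesis $x_\eta^\bullet(t_0)>x_\eta(t_0)$: since $x_\eta$ takes values in $\mathbb{Z}+\tfrac12$ and $x_\eta^\bullet$ in $\mathbb{Z}$, the hypothesis allows $x_\eta^\bullet(t_0)=x_\eta(t_0)+1/2$, i.e.\ the tagged particle \emph{is} the particle of the pair. This is not a marginal case: it is exactly the situation in which the lemma is applied (Corollary \ref{restriction-lemma-stoch} tags the particle belonging to the h--p pair, and the proof of Theorem \ref{first:Theorem} uses it at the moment the $k$-th tagged particle joins the pair). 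In that case your claimed invariant $x_\eta(t)+3/2\leq x_\eta^\bullet(t)$ is false (the correct one is $x_\eta(t)+1/2\leq x_\eta^\bullet(t)$, which the paper establishes via the monotone count $I_\bullet$), and the site $x_\eta(t)+3/2=x_\eta^\bullet(t)+1$ of your four-site window lies \emph{outside} the agreement region $i\leq x_\eta^\bullet(t)$. So your key assertion that ``the four sites governing the pair's motion all lie in the region where $\eta$ and $\eta'$ agree'' fails precisely when it is needed, and the bootstrap breaks.

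The gap is repairable, but it needs an extra input that you do not supply. Either (a) in the boundary case, note that the pair's rightward moves coincide with the jumps of the tagged particle, whose trajectory is \emph{assumed} equal in $\eta$ and $\eta'$ -- you quote this assumption only to say the windows have the same extent, not to control the right moves; or (b) follow the paper and detect a divergence of the two pairs at the first disagreement time through the \emph{post-jump} occupations of the two sites $x_\eta(t_m)\pm 1/2$ alone (a right move empties $x_\eta+1/2$, a left move fills $x_\eta-1/2$), both of which satisfy $i\leq x_\eta^\bullet(t_m)$ thanks to the weaker, correct confinement $x_\eta(t)+1/2\leq x_\eta^\bullet(t)$, so the window never protrudes past the tagged particle. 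The same correction is needed in your Case 2, where the tagged hole may be the hole of the pair.
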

\begin{proof}
We prove the statement for a tagged particle; the proof for a tagged hole is analogous.
%First, observe that $x_\eta^\bullet(t)=x_{\eta'}^\bullet(t)$.
Let $I_\bullet(t)$ be the number of particle between $x_\eta(t)$
and $x_\eta^\bullet(t)$. This number can only increase, 
which occurs whenever the hole-particle pair moves to the left. Since $x_\eta^\bullet(t)-x_\eta(t)>I_\bullet(t)$, and $I_\bullet(t_0)\geq 0$, we conclude that 
$x_\eta(t)<x_\eta^\bullet(t)=x_{\eta'}^\bullet(t)$.
%Since at any time the configurations of the particles systems are the same to the left of  $x_\eta^\bullet(t)=x_{\eta'}^\bullet(t)$
%
Both $x_\eta(t)$ and $x_{\eta'}(t)$ are c\`adl\`ag functions and coincide 
at $t=t_0$.
Therefore, if, by contradiction, they were not equal at all times, there 
must exist
$t_m= \max\{\,t\,|\,x_\eta(t)=x_{\eta'}(t)\}$ such that $x_\eta(t_m^+)
\neq x_{\eta'}(t_m^+)$. For this to occur, a necessary condition is that at least 
one of the following  holds: 
$$\eta(t_m^+,x_{\eta}(t_m)+ 1/
2)\neq \eta'(t_m^+,x_{\eta}(t_m)+ 1/2) \qquad \eta(t_m^+,x_{\eta}(t_m)- 1/
2)\neq \eta'(t_m^+,x_{\eta}(t_m)- 1/2)$$. However, this is not possible 
since $x_{\eta}(t_m)+ 1/2\leq x_\eta^\bullet(t_m)$. 
\end{proof}
The previous lemma essentially states that $x_\eta(t)$ depends only on 
the configuration of the system $\eta$ to the left of a tagged particle 
that is initially in front of the h--p pair, or to the right of a tagged 
hole that is initially to the right of the h--p pair. As a direct 
consequence of this lemma, we obtain the following result:
\begin{cor}\label{restriction-lemma-stoch}
Let $\eta$ and $\eta'$ be two stochastic particle systems 
that coincide at an initial time $t_0$. 
Assume further that both systems exhibit a hole-particle (h--p) pair located at the same position $x_0$ at time $t_0$, so that $x_\eta(t_0) = x_{\eta'}(t_0)=x_0$.

\noindent \underline{Case 1: Tagged Particle.}
Tag the particle belonging to hole--particle (h--p) pair at time 
$t=t_0$, and let $x_\eta^\bullet(t)$ 
denote its position at times $t\geq t_0$
in the realization of process $\eta$,
while $x_{\eta'}^\bullet(t)$ denotes its position in the realization of the process  $\eta'$.
Assume that, as stochastic processes, the position of the tagged particle and the configuration of particles to its left are identical in distribution for both $\eta$ and $\eta'$.
% 
%Assume that, as a stochastic process, the configuration of the system to the left of the tagged particle evolves according to the  TASEP dynamics and that the processes
%$x_\eta^\bullet$ and $x_{\eta'}^\bullet$ are identical in distribution. % for all $t \geq t_0$.

\noindent \underline{Case 2: Tagged Hole.}
Tag the hole belonging to the h--p pair at time $t = t_0$ and 
let $x_\eta^\circ(t)$ denote
its position in the realization at times $t\geq t_0$ 
in the realization of process $\eta$,
while $x_{\eta'}^\circ(t)$ denotes its position in the realization of the process  $\eta'$.
Assume that, as  stochastic processes, the position of the tagged hole 
and the configuration of particles to its right are identical in distribution for both $\eta$ and $\eta'$
%
%Assume that, as a stochastic process, the configuration of the system
%to the right of the tagged hole evolves according to the TASEP dynamics and that the 
%processes $x_\eta^\circ$ and $x_{\eta'}^\circ$ are identical in distribution.
% for all $t \geq t_0$.

Then in both cases the processes $x_{\eta}$ and $x_{\eta'}$ are identical in distribution on the interval $[t_0,+\infty )$.
\end{cor}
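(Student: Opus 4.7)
The strategy is to lift the deterministic content of Lemma~\ref{restriction-lemma} to the stochastic setting by means of a coupling. Focusing on Case~1, introduce the ``restricted data''
\[
 \xi_\eta \;=\; \bigl(\,x_\eta^\bullet(\cdot),\ \{\eta(t,i)\,:\, t\geq t_0,\ i\leq x_\eta^\bullet(t)\}\,\bigr),
\]
regarded as a random element in the Skorokhod space of c\`adl\`ag trajectories of ``tagged particle position together with the half-line configuration to its left''. The hypothesis of the corollary is precisely that $\xi_\eta$ and $\xi_{\eta'}$ have the same law on this path space.

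Since $\xi_\eta$ and $\xi_{\eta'}$ are equal in distribution, we may realise the two processes on a common probability space in such a way that $\xi_\eta = \xi_{\eta'}$ almost surely (the elementary coupling that identifies two copies of a common law on a Polish space). On this coupled space the deterministic hypotheses of Lemma~\ref{restriction-lemma} (Case~1) are met with probability one: at time $t_0$ both systems have the h--p pair at the same position $x_0$ and the tagged particle at the same position to the right of $x_0$, the tagged particle trajectories coincide, and the configurations to the left of the tagged particle agree at every $t\geq t_0$. Applying Lemma~\ref{restriction-lemma} pathwise then yields $x_\eta(t)=x_{\eta'}(t)$ for all $t\geq t_0$ almost surely, so in particular $x_\eta$ and $x_{\eta'}$ have the same distribution as c\`adl\`ag processes on $[t_0,+\infty)$. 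Case~2 is entirely analogous after swapping ``tagged particle'' with ``tagged hole'' and ``left'' with ``right'', and invoking the corresponding half of Lemma~\ref{restriction-lemma}.

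The one nontrivial point, which I expect to be the main (though still routine) obstacle, is the measurability bookkeeping: one should verify that $\xi_\eta$ is genuinely a measurable function of $\eta$ with values in a standard Borel path space, so that the distributional equality assumed in the statement can indeed be promoted to an almost sure identity via the identity coupling. This is a standard verification using the Skorokhod topology on c\`adl\`ag trajectories valued in $\{\circ,\bullet\}^{\mathbb{Z}}$ together with the natural product topology, and it involves no probabilistic input specific to TASEP; all the real work has already been carried out at the deterministic level in Lemma~\ref{restriction-lemma}.
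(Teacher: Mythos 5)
Your proposal is correct and matches the paper's intent: the paper offers no separate proof, presenting the corollary as a direct consequence of Lemma~\ref{restriction-lemma}, and your coupling argument (identify the equally distributed restricted data on a common probability space, then apply the deterministic lemma pathwise) is precisely the standard way to make that implication rigorous. The measurability caveat you flag is indeed the only bookkeeping required, and it is routine as you say.
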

%\begin{proof}
%Similar to the proof of Lemma \ref{restriction-lemma}, we establish the statement for a tagged particle; the proof for a tagged hole proceeds in an analogous manner.
%Due to the specific evolution of the particle system we are considering, 
%the fact that $x_\eta^\bullet$ and $x_{\eta'}^\bullet$ are 
%identical in distribution implies that the configuration of particles 
%to the left of $x_\eta^\bullet$ for process $\eta$, and
%to the left of  $x_{\eta'}^\circ$ for process $\eta'$, are identical in distribution for $t\geq t_0$.
%Thus, by applying Lemma \ref{restriction-lemma} in the Tagged Particle case, we conclude that 
% $x_{\eta}$ and $x_{\eta'}$ are identical in distribution for $t \geq t_0$.
%\end{proof}

\subsection{Dynamics of the impurity in terms of an auxiliary  TASEP  with non--homogeneous rates}\label{section:dynamics}

We now apply the hole--particle representation of a single impurity to 
our stochastic particle system. 
The scenario where the particle evolution $\eta(t,i)$ follows the usual TASEP dynamics  
is the one originally introduced by Ferrari and Pimentel in \cite{ferrari2005competition}. 
In the 
last--passage representation of the TASEP, the position of the h--p pair 
corresponds to a competition interface. By analyzing the statistical properties of this interface, Ferrari and Pimentel were able to deduce insights about the asymptotic speed of a second-class particle, including its almost sure existence and its distribution.
We now introduce two alternative constructions that extend the Ferrari and Pimentel framework in two distinct ways.

\vspace*{.3cm}
\noindent 
{\bf First construction.}

\noindent
The first construction leads to our original stochastic model with a single impurity. 
The system follows the standard TASEP dynamics, with the exception that 
whenever a particle is located within the pair (inside the red box), it 
jumps at rate  $\alpha$.
Similarly, whenever a hole is within the pair, it is overtaken at rate
$\beta$.
%
%The dynamics consists of particles jumping to their right on empty sites.
%All these jumps happen at rate $1$, except for that of
%the particle belonging to the h--p pair,  which 
%jumps with rate $\alpha$ and that of the particle jumping into the hole of the pair,  which happens with rate $\beta$.
\begin{center}
\begin{tikzpicture}[scale = 0.3]
\begin{scope}
\draw[thick] (0,0) -- (20,0);
\foreach \t in {2,4,...,18}
{
\draw [thick] (\t,0)--++(0,1);
}
\draw [ultra thick, ->]  (9,2.2) to [in=135,out=45] (11,2.2);
\draw [ultra thick, ->]  (5,2.2) to [in=135,out=45] (7,2.2);
\draw [ultra thick, ->]  (15,2.2) to [in=135,out=45] (17,2.2);
\draw (6,3.3) node {$\beta$} (10,3.3) node {$\alpha$} (16,3.3) node {$1$};
\draw [fill=black!60]  (5,1) circle (.7) (9,1) circle (.7) (15,1) circle (.7) (19,1) circle (.7);
\draw [ultra thick]  (1,1) circle (.7) (3,1) circle (.7) (7,1) circle (.7) (11,1) circle (.7) (13,1) circle (.7) (17,1) circle (.7);
\draw [ultra thick, dashed, red] (6,0)--++(4,0)--++(0,2)--++(-4,0)--cycle; 
\draw [line width=0.5mm, blue] (8,0)--++(0,2);
\fill [blue] (8,0)++(-75:1)--++(-75:-1)--++(-105:1)--cycle;
\end{scope}
\end{tikzpicture}
\end{center}
We call this process $\eta_1$ and $\eta_{\ast,1}$ the associated process in which the h--p pair is transformed into an impurity. We have that
$\eta_{\ast,1}$ obeys the stochastic dynamic of our original model.

\vspace*{.3cm}
\noindent 
{\bf Second construction.} 

\noindent
A second process, denoted by $\eta_2$, 
is defined by  tagging both the particle and the hole that initially form the pair. 
In this construction, particles jump to the right into empty sites, 
similar to the first construction. However, here the tagged particle 
jumps at rate $\alpha$, while 
the tagged hole is overtaken at rate $\beta$. 
%the tagged hole gets occupied 
%
%with rate $\beta$ and all the other jumps happens with rate $1$.
We denote by $\eta_{\ast,2}$ the associated process in which the h--p 
pair is transformed into an impurity.

\vspace*{.3cm}      

While the two processes $\eta_{\ast,1}$ and $\eta_{\ast,2}$ are differently distributed for generic values of the parameters $\alpha$ and $\beta$, we have the following theorem:     
      
\begin{theorem} \label{first:Theorem}
Let  $\eta_{1}$ and $\eta_{2}$ be the two processes defined above, both 
starting from the same initial configuration. Denote by 
$x_1(t)$ and $x_2(t)$  the positions of the hole--particle pair in
 $\eta_{1}$ and $\eta_{2}$, respectively. The 
two processes, $x_1$ and $x_2$, are identical in distribution.
\end{theorem}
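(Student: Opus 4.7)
The plan is to combine the one-sided characterization of $x_\eta$ given by Corollary \ref{restriction-lemma-stoch} with Weber's interchangeability theorem, which the introduction already flags as the key new ingredient needed to go beyond the classical $\alpha=\beta=1$ case.

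First, I would tag in both processes the particle occupying the right end of the pair at time $t_0$, namely position $y_0+1$. Case 1 of Corollary \ref{restriction-lemma-stoch} then says that $(x_\eta(t))_{t\geq t_0}$ is a measurable functional of the joint process formed by this tagged particle's trajectory and the configuration at sites to its left, so it suffices to show that these joint laws agree between $\eta_1$ and $\eta_2$. A perfectly symmetric reduction via Case 2, obtained by tagging the hole at $y_0$, will be used in parallel to treat the rate $\beta$ that controls left-movements.

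The two reduced laws are not literally equal. In $\eta_2$ the tagged particle has rate $\alpha$ for all time, whereas in $\eta_1$ it carries rate $\alpha$ only while it still constitutes the right end of the pair, and reverts to rate $1$ as soon as the pair is overtaken on the left at rate $\beta$, which transfers the rate-$\alpha$ ``label'' to a new particle further back in the queue. Viewed through the tandem-of-$M/M/1$-queues encoding of TASEP (successive particles as servers, inter-particle gaps as customers), this means the $\alpha$-label sits on a fixed server in $\eta_2$ and wanders along the tandem in $\eta_1$. Weber's interchangeability theorem asserts invariance of the output distribution under permutations of the service rates in a tandem of exponential servers; translated back, this is exactly the distributional equality of the tagged particle's trajectory together with its left-configuration needed to feed into Corollary \ref{restriction-lemma-stoch}. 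A mirror-image queueing argument on the left-tandem handles the rate $\beta$ attached to the tagged hole, and the two one-sided reductions are then combined to conclude $x_1\stackrel{d}{=}x_2$.

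The main obstacle will be to make Weber's theorem rigorous in this infinite-volume setting with a moving reference frame. Weber's result is classically stated for a single customer passing through a finite tandem of servers, so a truncation-and-limit argument—restrict to a window $[-N,N]$, apply the finite statement of Weber's theorem, and let $N\to\infty$—is needed, together with a verification that the rate-$\alpha$ label in $\eta_1$ really does transform as a permutable service rate under the queueing translation. A subsidiary subtlety is that the $\alpha$- and $\beta$-events are coupled through the pair position in $\eta_1$ but live on independently evolving tagged objects in $\eta_2$; resolving this cleanly requires reapplying Lemma \ref{restriction-lemma} to guarantee that $x_\eta$ only ever sees each side through its own tagged object, so that the two one-sided comparisons can be carried out in parallel without inconsistency.
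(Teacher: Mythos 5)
You have correctly identified the two ingredients the paper uses (the one-sided restriction property of Corollary \ref{restriction-lemma-stoch} and Weber's interchangeability theorem), but the way you combine them has a genuine gap. Your plan is a one-shot reduction: tag the particle that sits at the right end of the pair at time $t_0$, and show that the joint law of (its trajectory, the configuration to its left) agrees between $\eta_1$ and $\eta_2$. That joint equality is in fact false. In $\eta_1$ the rate-$\alpha$ label migrates to whichever particle currently occupies the pair, i.e.\ to particles strictly to the \emph{left} of your fixed tagged particle once the pair has moved left; in $\eta_2$ all of those particles keep rate $1$. Hence the configuration to the left of the fixed tagged particle has genuinely different laws in the two processes (consistently with the paper's remark that $\eta_{\ast,1}$ and $\eta_{\ast,2}$ are differently distributed as full configurations), and the hypothesis of Case 1 of Corollary \ref{restriction-lemma-stoch} is simply not satisfied for this choice of tag. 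Moreover, the appeal to Weber's theorem at this point is not legitimate: Weber's result concerns interchanging a \emph{fixed} assignment of service rates in a tandem, whereas in $\eta_1$ the rate-$\alpha$ label is attached to a realization-dependent, stochastically moving server. You flag exactly this as ``the main obstacle,'' but you do not supply the idea that resolves it, and without that idea the argument does not close.

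The paper's resolution, which is absent from your proposal, is an interpolating family of processes $\eta^{(k,\ell)}$ (rate $\alpha$ follows the pair until the $k$-th particle to the left joins it and then sticks to that particle forever; symmetrically for the $\ell$-th hole), with $\eta_2=\eta^{(0,0)}$ and $\eta_1=\eta^{(\infty,\infty)}$, together with an induction in $k$ and separately in $\ell$. The point of this construction is localization in time and space: $\eta^{(k,\ell)}$ and $\eta^{(k-1,\ell)}$ can be coupled to coincide up to the stopping time $t_\bullet^{(k)}$ at which the $k$-th particle joins the pair, and from that moment on they differ only by swapping the rates $\alpha$ and $1$ between two \emph{adjacent} tagged particles located at and immediately to the right of the pair. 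This is precisely the finite, fixed-rate-assignment setting in which Weber's theorem (Corollary \ref{interch-coroll}) applies, and it yields equality in law of the trajectory of the trailing particle -- the one newly in the pair -- after which Corollary \ref{restriction-lemma-stoch} is invoked with \emph{that} freshly chosen tagged particle (not a fixed one chosen at time $t_0$), whose left-configuration now does evolve identically in both processes. Doing the particle side and the hole side as two separate one-parameter inductions also disposes of the ``parallel combination'' issue you leave vague at the end. To repair your proof you would need to replace the single global comparison by this chain of elementary swaps performed at the successive stopping times at which new particles and holes join the pair.
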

This is one of the main results of the paper, and in the following 
sections, we will explore some of its consequences. Since the position of 
the impurity in our original model is given by 
$y_1(t)=x_1(t)-1/2$, 
Theorem \ref{first:Theorem} allows us to express its distribution 
essentially in terms of that of $x_2(t)$, which is much simpler to study.

To prove Theorem \ref{first:Theorem}, we will employ some classical 
results from queueing theory. Therefore, we begin by briefly recalling 
the connection between queues and the TASEP.
Consider a TASEP in which all the particles and the empty sites are labeled, and 
particle $i$ exchanges its position with the empty site $j$
immediately to its right at a rate $\tau_{i,j}$. 
This process is equivalent to a system of $\cdot/M/1$ queues in tandem.
One way to understand this equivalence is by associating particles with 
servers and empty sites with customers. When a customer is served by a 
server, it moves to the next server, where it is served after all other 
customers already present.  
The service time for customer $j$ by server  $i$ is given by  
$\tau_{i,j}$. 
In the standard TASEP, the rate $\tau_{i,j}=1$. Notice  we can also 
reverse this identification, treating empty sites as servers and 
particles as customers.
Several theorems about TASEP can be derived as direct consequences of results from queueing theory \cite{ferrari1994net}.

A remarkable result by Weber \cite{weber1979interchangeability} states 
that a series of $./M/1$
queues initially empty and arranged in tandem are \emph{ interchangeable}. This means that the distribution of the output process remains the same regardless of the order in which the queues are arranged. In other words, reordering the sequence of the queues does not affect the statistical properties of the departure process from the final queue.

\begin{theorem}[Weber \cite{weber1979interchangeability}]\label{theo:weber}
Consider two independent $./M/1$ queueing 
servers arranged in tandem, with service rates $\alpha_1$ and $\alpha_2$. The system is initially empty and the first queue has some
arrival process $A$, with an arbitrary distribution.
Then, the law of the departure process from
the system (i.e. of the departure process from the second queue) is the
same if the rates $\alpha_1$ and $\alpha_2$ are interchanged.
\end{theorem}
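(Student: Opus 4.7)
The plan is to prove the stronger statement that the full joint law of the departure process $(D_n^{(2)})_{n \ge 1}$ from the second queue is invariant under the exchange $\alpha_1 \leftrightarrow \alpha_2$, conditional on any realization of the arrival process $A=(A_1,A_2,\ldots)$. First I would pass to a last-passage percolation (LPP) representation. Iterating the two Lindley recursions $D_n^{(1)} = \max(A_n, D_{n-1}^{(1)}) + \sigma_n^{(1)}$ and $D_n^{(2)} = \max(D_n^{(1)}, D_{n-1}^{(2)}) + \sigma_n^{(2)}$, where $\sigma_j^{(i)} \sim \mathrm{Exp}(\alpha_i)$ are independent service times, yields
\[
D_n^{(2)} = \max_{1 \le k \le m \le n} \left[ A_k + \sum_{j=k}^{m} \sigma_j^{(1)} + \sum_{j=m}^{n} \sigma_j^{(2)} \right].
\]
Weber's claim is thus equivalent to the joint law of these max-statistics over a $2\times n$ exponential LPP array being invariant under swapping the rates assigned to the two rows.

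The main probabilistic tool is a reversibility/coupling argument that exploits the memoryless property of exponentials. Viewing the pair $(Q_1(t), Q_2(t))$ of queue lengths as a continuous-time Markov process driven by the deterministic arrival times, one can verify that its appropriate time-reversal runs as a tandem of two $\cdot/M/1$ queues in which the roles of input process and output process are interchanged and, crucially, the service rates $\alpha_1$ and $\alpha_2$ are swapped. Burke's theorem gives the canonical instance of this phenomenon: for Poisson input in equilibrium the output is Poisson with the same rate, so the swap-invariance is transparent. The content of Weber's theorem is the extension of this idea to an arbitrary, possibly transient and non-Markovian, input.

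The main obstacle is precisely that the tandem starts empty and hence is not in stationarity, so one cannot invoke reversibility directly. The cleanest workaround, and the route taken by Weber \cite{weber1979interchangeability}, is a pathwise coupling: given a realization $(\sigma_j^{(1)}, \sigma_j^{(2)})_{j \ge 1}$, construct an alternative realization $(\tilde\sigma_j^{(1)}, \tilde\sigma_j^{(2)})_{j \ge 1}$ whose marginals are swapped (so $\tilde\sigma_j^{(1)} \sim \mathrm{Exp}(\alpha_2)$ and $\tilde\sigma_j^{(2)} \sim \mathrm{Exp}(\alpha_1)$) but which produces the same sequence of departure times $D_n^{(2)}$ when plugged into the LPP formula above. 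The construction proceeds customer by customer and uses the exponential-specific identities that $\min(\sigma^{(1)},\sigma^{(2)}) \sim \mathrm{Exp}(\alpha_1+\alpha_2)$ with minimizing index $\mathrm{Bernoulli}(\alpha_1/(\alpha_1+\alpha_2))$ independent of its value, together with the memoryless rewriting of the surviving increment. This combinatorial bookkeeping is the delicate heart of the proof; once the bijection is in place and shown to be measure-preserving, invariance of the joint law of $(D_n^{(2)})_{n \ge 1}$, and hence Weber's theorem, follows immediately.
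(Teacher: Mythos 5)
The paper does not actually prove this statement: Theorem \ref{theo:weber} is quoted as an external result, with the proof deferred entirely to Weber's original article, so there is no internal argument to compare yours against. Judged on its own terms, your proposal correctly reduces the claim, via the Lindley recursions, to the invariance of the joint law of the two-row exponential last-passage times under exchange of the row rates; the formula $D_n^{(2)} = \max_{1\le k\le m\le n}\bigl[A_k + \sum_{j=k}^m \sigma_j^{(1)} + \sum_{j=m}^n \sigma_j^{(2)}\bigr]$ is right, and the Burke-theorem heuristic is a reasonable guide.

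The gap is that the entire mathematical content of the theorem is concentrated in the one step you defer: the existence of a measure-preserving map $(\sigma^{(1)},\sigma^{(2)})\mapsto(\tilde\sigma^{(1)},\tilde\sigma^{(2)})$ with swapped marginals that leaves every $D_n^{(2)}$ unchanged. You assert that this follows from ``exponential-specific identities'' and ``memoryless rewriting,'' but you never define the map, never verify that it preserves the departure sequence, and never check that it pushes one product measure onto the other; you explicitly call this ``the delicate heart of the proof'' and then skip it. Note also that such an exact pathwise bijection is a strictly stronger assertion than Weber's theorem itself, and it is not obviously available: Weber's own argument, and the standard later proofs, work distributionally --- e.g., by induction on the number of customers, computing the conditional law of the next departure given the past departures and showing it is symmetric in $\alpha_1,\alpha_2$ --- rather than by producing a coupling that matches departures realization by realization. (The Appendix of this paper, which verifies the symmetry of the marginal law of the leading tagged particle through explicit determinantal formulas, illustrates how nontrivial even a special case of this computation is.) As written, your proposal is a plausible plan plus an appeal to the result being proved, not a proof.
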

As a direct consequence of Theorem \ref{theo:weber}, we obtain the following result:
\begin{cor}\label{interch-coroll}
\underline{Particle case}. Consider a TASEP with two tagged particles of positions $x^\bullet_1(t)$ and $x^\bullet_2(t)$, and advancing rates $\alpha_1$ and $\alpha_2$, respectively. Suppose that at some initial time $t_0$, the two tagged particles are neighbors, i.e., the position of the tagged particles satisfies $x^\bullet_2(t_0) = x^\bullet_1(t_0) + 1$. Then, for any time $t \geq t_0$, the law governing the position of the tagged particle $x^\bullet_1(t)$ remains the same if the rates $\alpha_1$ and $\alpha_2$ are interchanged. \\
\underline{Hole case}. Consider a TASEP with two tagged holes of positions $x^\circ_1(t)$ and $x^\circ_2(t)$, and receding rates $\beta_1$ and $\beta_2$, respectively. Suppose that at some initial time $t_0$, the two tagged holes are neighbors, i.e., the position of the tagged holes satisfies $x^\circ_2(t_0) = x^\circ_1(t_0) + 1$. Then, for any time $t \geq t_0$, the law governing the position of the tagged hole $x^\circ_2(t)$ remains the same if the rates $\beta_1$ and $\beta_2$ are interchanged.   
\end{cor}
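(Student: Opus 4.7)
My plan is to reduce Case 1 to Weber's interchangeability theorem (Theorem \ref{theo:weber}) by identifying the dynamics around the two adjacent tagged particles with a tandem of two $\cdot/M/1$ queues; Case 2 will then follow by the particle--hole symmetry of TASEP, which under the reflection $i \mapsto -i$ and the exchange $\bullet \leftrightarrow \circ$ maps the pair of adjacent tagged holes (with the trailing hole being the rightmost one) to an instance of Case 1.

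For Case 1, I view each particle as a server and each hole as a customer flowing from right to left: each time a particle jumps right, exactly one hole transfers from the gap on its right into the gap on its left. Denote by $Q_{\mathrm{in}}$ the gap between particle $2$ and the first particle to its right, and by $Q_{\mathrm{out}}$ the gap between particles $1$ and $2$. Then customers arrive at $Q_{\mathrm{in}}$, are served at particle $2$ with rate $\alpha_2$, move to $Q_{\mathrm{out}}$, are served at particle $1$ with rate $\alpha_1$, and depart; the assumption $x^\bullet_2(t_0)=x^\bullet_1(t_0)+1$ forces $Q_{\mathrm{out}}$ to be empty at $t_0$. To satisfy the hypothesis of Theorem \ref{theo:weber} I also need $Q_{\mathrm{in}}$ initially empty, which I arrange by moving the customers present in $Q_{\mathrm{in}}$ at $t_0$ into the external arrival process as a batch arriving at time $t_0$. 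The resulting augmented arrival process is measurable with respect to the initial configuration and the Poisson clocks attached to the particles strictly to the right of particle $2$, and therefore has a law that does not depend on $\alpha_1$ or $\alpha_2$. Weber's theorem then asserts that the law of the departure process from the second server is invariant under $\alpha_1 \leftrightarrow \alpha_2$. Since that departure process is precisely the sequence of jump times of particle $1$, the counting process $N_1(t)=x^\bullet_1(t)-x^\bullet_1(t_0)$ has an invariant law, yielding Case 1.

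The main technical obstacle is the justification of the ``batch arrival at $t_0$'' reduction used to bring $Q_{\mathrm{in}}$ into the empty state assumed by Weber's theorem as classically stated. This amounts to a mild extension permitting arrival processes with point masses at $t_0$, which is compatible with the pathwise/coupling nature of Weber's proof, but should be flagged explicitly. Once this point is settled, Case 1 is a direct application of the queueing correspondence and Case 2 follows from Case 1 via particle--hole symmetry.
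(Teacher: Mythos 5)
Your proposal is correct and follows essentially the same route the paper intends: the corollary is presented there as a direct consequence of Weber's theorem via the standard particles-as-servers/holes-as-customers tandem-queue correspondence, with the hole case obtained by particle--hole symmetry. Your explicit handling of the initially non-empty upstream buffer (folding the holes already in the gap to the right of particle $2$ into a batch arrival at $t_0$, justified by memorylessness of the exponential services and by the fact that the arrival process is driven only by the dynamics to the right of particle $2$ and hence is independent of $\alpha_1,\alpha_2$) is a worthwhile detail that the paper leaves implicit.
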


\begin{proof}[Proof of Theorem \ref{first:Theorem}]
Consider the following family of processes, denoted by 
$\eta^{(k,\ell)}$, with 
$\ell,k \in \mathbb{N}$, 
and having the same initial configuration, equal to $\eta_1(t_0)=\eta_2(t_0)$.
%each containing a hole--particle pair.
%Assume that the initial configuration is the same for all processes
%$\eta^{(k,\ell)}$, and denote by $x_{(k,\ell)}(t) =x_{\eta^{(k,\ell)}}(t) 
%$ the position of the hole--particle pair in the process $\eta^{(k,\ell)}
%$. 
To define these processes, we tag, at the initial time, the  $k$-th particle to the left of the hole--particle pair and the   
$\ell$-th hole to the right of the hole--particle  pair.
If $k=0$ 
(respectively, $\ell=0$), we tag the particle (respectively, the hole) that belongs to the hole--particle pair itself.
Initially, we set the jump rate of the particle in the hole--particle pair to be equal to  
$\alpha$,  while that of the hole is set to be 
$\beta$, and all other jumps occur at rate of $1$.
We assume that this setup persists until the tagged particle or the tagged hole becomes part of the pair. From the moment the tagged particle joins the pair, it acquires and retains the jump rate 
$\alpha$ indefinitely. Similarly, from the moment the tagged hole joins the pair, it acquires and retains the jump rate $\beta$ indefinitely.
Observe that the two processes $\eta_1$ and $\eta_2$ correspond 
respectively to $\eta^{(+\infty,+\infty)}$ and $\eta^{(0,0)}$.

%Assume that the initial configuration is the same for all processes
%$\eta^{(k,\ell)}$, and d
Let $x_{\eta^{(k,\ell)}} 
$ denote the position of the hole--particle pair in the process $\eta^{(k,\ell)}
$.
Our goal is to show that the distribution of 
$x_{\eta^{(k,\ell)}}$ is the same for all $k,\ell \in\mathbb{N}$. 
Formally, this is expressed as:
\begin{equation}
x_{\eta^{(k,\ell)}} \overset{\text{law}}{=} x_{\eta^{(k',\ell')}}  \qquad \forall k,k',\ell,\ell' \in \mathbb{N}.
\end{equation}
We proceed by proving that $ \forall k,\ell \in \mathbb{N}_{>0}$:
\begin{align}\label{interch1}
x_{\eta^{(k,\ell)}}  &\overset{\text{law}}{=} x_{\eta^{(k-1,\ell)}}  \\ \label{interch2}
x_{\eta^{(k,\ell)}}  &\overset{\text{law}}{=} x_{\eta^{(k,\ell-1)}} 
\end{align}
This result implies that incrementing 
$k$ or $\ell$ does not affect the law of the trajectory of the 
hole--particle pair, thereby establishing the equivalence of distributions for all $(k,\ell)$.
To begin with, let's consider equation eq.\eqref{interch1} and define the 
time $t_\bullet^{(k)}$ as the moment when the $k$--th tagged
particle 
becomes part of the hole--particle pair. It is clear that the two 
processes
$\eta^{(k-1,\ell)}$ and $\eta^{(k,\ell)}$ 
%
%are identical in law up to time $t_k$;  indeed, they 
can be  coupled to coincide up to time $t_\bullet^{(k)}$, 
implying that $x_{\eta^{(k,\ell)}}$ and $x_{\eta^{(k-1,\ell)}}$ match  up to that time. 
Our task is to show that these letter processes remain identically distributed even for $t>t_\bullet^{(k)}$. 
By Corollary \ref{interch-coroll} (particle case), we know that the law governing the trajectory of the particle which at time $t_\bullet^{(k)}$ belongs to the hole--particle pair is the same for both $\eta^{(k-1,\ell)}$ and $\eta^{(k,\ell)}$. 
Then, using Corollary \ref{restriction-lemma-stoch}, we deduce that the distribution 
of the position of the hole-particle pair after time $t_\bullet^{(k)}$ is identical  
for both processes $\eta^{(k-1,\ell)}$ and $\eta^{(k,\ell)}$, 
thereby establishing eq.\eqref{interch1}. Similarly, by applying Corollary \ref{interch-coroll} (hole case) in conjunction with Corollary \ref{restriction-lemma-stoch}, we can prove eq.\eqref{interch2}.
\end{proof}

\subsection {Asymptotic behavior of an impurity}
\label{sect:speed-fan}

As a first consequence of Theorem \ref{first:Theorem}, we can derive a 
purely probabilistic proof of the asymptotic speed of an impurity placed 
in a uniform background of particles, as given in equation 
\eqref{def-speed}:
\begin{theorem} 
Consider an impurity moving in a background of particles with uniform 
density $\rho$. The impurity has an asymptotic speed given by 
\begin{equation}\label{asympt-speed-theo}
v_\ast(\rho)= v^+(\rho)-v^-(\rho),
\end{equation}
where $v^+(\rho)=\min (\alpha, 1-\rho )$ and $v^{-}(\rho) = \min (\beta, \rho )$.
\end{theorem}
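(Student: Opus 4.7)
The plan is to use Theorem~\ref{first:Theorem} to translate the question into one about the hole–particle pair in the auxiliary process $\eta_2$, and then combine classical slow-particle/slow-hole asymptotics for TASEP with a simple flux-counting identity. Since the impurity's position satisfies $y_1(t) = x_1(t) - 1/2$ and $x_1 \overset{\text{law}}{=} x_2$ by Theorem~\ref{first:Theorem}, it suffices to prove $x_2(t)/t \to v^+(\rho) - v^-(\rho)$ almost surely in $\eta_2$ started from Bernoulli$(\rho)$, with the tagged particle (rate $\alpha$) and tagged hole (rate $\beta$) adjacent at the origin.

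The first ingredient is classical: in a Bernoulli$(\rho)$ TASEP, a single tagged particle with jump rate $\alpha$ has almost sure asymptotic speed $v^+(\rho) = \min(\alpha, 1-\rho)$ (moving at the characteristic speed $1-\rho$ when $\alpha \geq 1-\rho$, and otherwise creating a jam of density $1-\alpha$ to its left and moving at rate $\alpha$). Dually, a tagged hole with overtake rate $\beta$ has asymptotic speed $-v^-(\rho) = -\min(\beta, \rho)$. Consequently $x^\bullet(t)/t \to v^+(\rho)$ and $x^\circ(t)/t \to -v^-(\rho)$ almost surely in $\eta_2$.

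To extract the pair's speed, I would introduce $C(t)$, the number of particles in the open integer interval $(x^\circ(t), x_2(t))$, and $D(t)$, the number of holes in $(x_2(t), x^\bullet(t))$. A case analysis of the admissible transitions (left/right pair moves, tagged-particle/hole moves, interior jumps) shows that $C$ increases by one with each leftward tagged-hole move and decreases by one with each leftward pair move, with a symmetric statement for $D$. Writing $L(t), R(t)$ for the cumulative numbers of leftward and rightward pair moves, this yields $L(t) = (x^\circ(t_0) - x^\circ(t)) - C(t)$ and $R(t) = (x^\bullet(t) - x^\bullet(t_0)) - D(t)$, whence
\[
x_2(t) - x_2(t_0) = (x^\bullet(t) - x^\bullet(t_0)) + (x^\circ(t) - x^\circ(t_0)) + C(t) - D(t).
\]

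Dividing by $t$ and invoking the slow-particle/slow-hole asymptotics, the theorem follows as soon as one shows $(C(t) - D(t))/t \to 0$ almost surely, which is the main obstacle. Heuristically, the slow tagged particle and slow tagged hole force two density plateaus to develop between them—value $\min(\beta,\rho)$ to the left of the pair and $\max(1-\alpha,\rho)$ to the right—meeting at a shock pinned at the pair; the Rankine–Hugoniot condition at this shock is exactly what matches the particle count on the left with the hole count on the right at leading order. For a rigorous bound I would invoke Corollary~\ref{restriction-lemma-stoch} to couple $\eta_2$ with half-systems having known invariant profiles around a single slow particle or slow hole, and then control the fluctuations of $C$ and $D$ via a law of large numbers for tagged objects in TASEP.
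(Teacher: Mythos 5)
Your first half coincides with the paper's: both reduce the problem via Theorem \ref{first:Theorem} to the pair process $\eta_2$ and both use the classical slow-tagged-particle/slow-tagged-hole speeds $v^+(\rho)=\min(\alpha,1-\rho)$ and $-v^-(\rho)=-\min(\beta,\rho)$ (the paper gets these from Burke's theorem). Your flux-counting identity
$x_2(t)-x_2(t_0)=(x^\bullet(t)-x^\bullet(t_0))+(x^\circ(t)-x^\circ(t_0))+C(t)-D(t)$
is correct as bookkeeping. The gap is the step you yourself flag: $(C(t)-D(t))/t\to 0$. This is not a fluctuation estimate. Generically $C(t)$ and $D(t)$ are each of order $t$ (e.g.\ for $\alpha\geq 1-\rho$, $\beta\geq\rho$ both are $\approx\rho(1-\rho)t$), so you need an exact cancellation of leading orders, which requires the full macroscopic density profile in the two intervals $(x^\circ(t),x_2(t))$ and $(x_2(t),x^\bullet(t))$ --- including where the interface between the plateau $\min(\beta,\rho)$ and the plateau $\max(1-\alpha,\rho)$ sits. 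But that interface location is precisely the position of the pair, i.e.\ the quantity being determined; the argument is circular unless you independently establish that the pair is pinned at the shock. The tools you propose do not supply this: Corollary \ref{restriction-lemma-stoch} only says the pair's trajectory is measurable with respect to the configuration on one side of a tagged object, and a law of large numbers for the tagged objects gives their speeds, not the density profile between them.

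The paper closes exactly this hole differently. Using Burke's theorem it identifies the trajectory of the tagged rate-$\alpha$ particle (together with the configuration to its left) \emph{in law} with that of a rate-$1$ tagged particle in a Bernoulli background of density $\rho_R=1-v^+$, and dually the tagged hole with a rate-$1$ hole with density $\rho_L=v^-$ on its left; by Corollary \ref{restriction-lemma-stoch} the pair then has the law of a second-class particle sitting at a $\rho_L$--$\rho_R$ shock (with $\rho_L\leq\rho_R$), and the known microscopic shock-tracking result \cite{ferrari1991microscopic} gives the speed $1-\rho_L-\rho_R=v^+-v^-$ directly. If you want to salvage your route, the honest fix is to import that same shock-tracking result (or an equivalent local-equilibrium statement around the pair), at which point the flux-counting identity becomes redundant; as written, the proposal's crucial step remains a heuristic.
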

\begin{proof}
We use Theorem \ref{first:Theorem} 
to relate the trajectory of the impurity in our model to that of a hole--particle pair in a similar system. Given the configuration of the background particles, the tagged particle moves at a stationary speed
 $v^{+} = \min (\alpha, 1-\rho )$. 
This result is a direct consequence of Burke's theorem in queueing 
theory. Thus, the dynamics of this tagged particle is equivalent to 
that of a tagged particle with rate $1$, embedded in a uniform profile 
with density  $\rho_R = 1 - v^{+}$ to its right. 
%Obviously, this is different from $\rho$ only when $\alpha < 1 - \rho$. 
Similarly, the tagged hole moves backward at an asymptotic rate of 
$v^{-} = \min (\beta, \rho )$, as if it were a hole of rate $1$ and the 
density on its left is $\rho_{L} = v^{-}$. 
Therefore, asymptotically, the hole--particle pair 
behaves like a second--class particle in the TASEP, where all the jump rates are equal to $1$  and the density profile becomes 
asymptotically uniform with density $\rho_R$  to the right of the pair
and density $\rho_L$ to the left of the pair.
If $\rho_L=\rho_R$  then both densities equal 
$\rho$, and we know that a second-class particle in this setup moves with 
an average speed equal to the characteristic speed of the underlying 
Burgers equation, which in this case is $v_\ast=1-2\rho$. 
This result matches the expression in 
eq.\eqref{asympt-speed-theo}.
If $\rho_L\neq\rho_R$, then $\rho_L<\rho_R$ and in that case the particle 
density profile forms a shock between $\rho_L$ and $\rho_R$.
In such cases, it is known  \cite{ferrari1991microscopic} that the second-class particle, and thus in our context the hole--particle pair, will be located at the position of the shock and will therefore move with an asymptotic speed of
$1- \rho_{L}-\rho_{R} = v^{+}-v^{-}$, which coincide with the claimed expression in eq.\eqref{asympt-speed-theo}.
\end{proof}
\noindent
We now turn to the problem of an impurity positioned precisely at the interface between two regions. Our goal is to compare two different systems and demonstrate that the law governing the impurity's trajectory is identical in both cases.\\
\emph{System I}.
In this system, we consider an impurity with jumping rates $\alpha,\beta<1$. The impurity is situated at the boundary between a fully occupied region to the left and a completely empty region to the right.\\
\emph{System II}.
Here, the impurity is an ordinary second--class particles (with $\alpha=\beta=1$). This second-class particle is positioned at the interface between two regions: on the left, particles are distributed according to a product Bernoulli measure with density $\rho_L=\beta$; on the right, they follow a product Bernoulli measure with density $\rho_R=1-\alpha$.    
\begin{theorem}\label{theo:main2}
The trajectory of the impurity in System I and that of the second-class particle in System II have the same distribution.
\end{theorem}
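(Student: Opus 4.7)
The plan is to chain Theorem \ref{first:Theorem} to a Burke's-theorem-based equivalence of the same kind invoked in the proof of the asymptotic-speed theorem above, but upgraded from an equality of asymptotic speeds to an equality of joint laws.

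First I would apply Theorem \ref{first:Theorem} to System I: the trajectory of the impurity equals in law the trajectory $x_2(t)$ of the hole--particle pair in the $\eta_2$ construction started from the $1$--$0$ step, with the tagged particle (at position $1$) jumping at rate $\alpha$, the tagged hole (at position $0$) overtaken at rate $\beta$, and every other particle and hole at rate $1$. Since $\alpha,\beta<1$, in this process the tagged particle is forever the rightmost particle (its right neighbour is always empty) and the tagged hole is forever the leftmost hole (its left neighbour is always a particle); consequently their trajectories are simply independent Poisson processes, of rates $\alpha$ (rightward) and $\beta$ (leftward) respectively.

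Next I would apply Burke's theorem to each tagged entity, exactly as done for the tagged particle and tagged hole in the proof of the asymptotic-speed theorem above, but now at the level of joint laws rather than mere averages. The tagged particle of rate $\alpha$ with an empty right environment is equivalent -- jointly with the wake it leaves behind -- to a rate-$1$ particle immersed in a Bernoulli($1-\alpha$) environment to its right. Symmetrically, the tagged hole of rate $\beta$ with a fully occupied left environment is equivalent to a rate-$1$ hole in a Bernoulli($\beta$) environment to its left. Plugging these two equivalences into Corollary \ref{restriction-lemma-stoch} identifies the law of $x_2(t)$ with the pair-trajectory law of the standard rate-$1$ TASEP started with the pair at the origin, Bernoulli($\beta$) to the left and Bernoulli($1-\alpha$) to the right. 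That process is precisely System II with its second-class particle realised through the hole--particle-pair representation, which delivers the claimed equality of laws.

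The main difficulty is exactly this upgrade of Burke's theorem from equality of asymptotic speeds (all that was needed in the preceding theorem) to equality of joint laws of (tagged particle trajectory, wake configuration) as demanded by Corollary \ref{restriction-lemma-stoch}. This is essentially the full output theorem of Burke for stable $\cdot/M/1$ queues, transported to the TASEP via the standard particle--queue dictionary used for Weber's theorem in Section \ref{section:dynamics}. Once this stronger equivalence is in hand, the three steps above compose directly -- Theorem \ref{first:Theorem} on one end, Corollary \ref{restriction-lemma-stoch} on the other, with the Burke equivalence as the bridge -- to give the theorem.
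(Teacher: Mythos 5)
Your proposal follows essentially the same route as the paper's proof: apply Theorem \ref{first:Theorem} to reduce System I to the pair construction with a free rate-$\alpha$ tagged particle and a rate-$\beta$ tagged hole (both Poisson), match these via Burke's theorem to the rate-$1$ tagged particle and hole of System II's pair representation in the Bernoulli($\beta$)/Bernoulli($1-\alpha$) background, and conclude with the restriction lemma/corollary. The only quibble is that the tagged particle being forever the rightmost particle follows from the $1$--$0$ initial condition and order preservation, not from $\alpha,\beta<1$ (which is only needed so that $\rho_L=\beta$ and $\rho_R=1-\alpha$ are admissible densities), but this does not affect the argument.
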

\begin{proof}
To prove the theorem, we reformulate both systems using the hole--particle representation of the impurity.

For System I, by Theorem \ref{first:Theorem}, 
we find that the impurity's trajectory has the same distribution as that 
of a hole--particle pair, with the particle initially in the pair having 
a jump rate of $\alpha$ and the hole a jump rate $\beta$. 
Given the initial configuration, the tagged particle's trajectory is 
simply a Poisson process with speed $\alpha$, while the  
trajectory of the tagged hole follows a Poisson process with speed 
$-\beta$. 

Now, in System II, the trajectory of the second--class particle matches that of a hole--particle pair in a homogeneous TASEP with initial density 
$\rho_L$ to the left of the pair and $\rho_R$ 
to the right. By applying Burke's theorem (see Example 3.2 in \cite{spitzer1970interaction}), we know that the trajectory of the particle in the pair is a Poisson process with rate $1-\rho_R=\alpha$.
Similarly, using the particle--hole symmetry of TASEP, the trajectory of the hole initially in the pair is a Poisson process with rate
$-\rho_L=-\beta$.

Thus, in both systems, the particle and hole in the pair behave identically. By applying Lemma \ref{restriction-lemma}, we conclude that the trajectories of the hole--particle pairs have the same distribution in both cases. Consequently, this holds for the impurity in System I and the second-class particle in System II.
\end{proof}
An example illustrating Theorem \ref{theo:main2} 
is presented in Figure 
\ref{fig:comparison}, which shows the position distributions of the impurity and the second-class particle in their respective systems after a finite time.
\begin{figure}[h!]
	\centering
	\begin{subfigure}[b]{0.45\linewidth}
		\includegraphics[width=\linewidth]{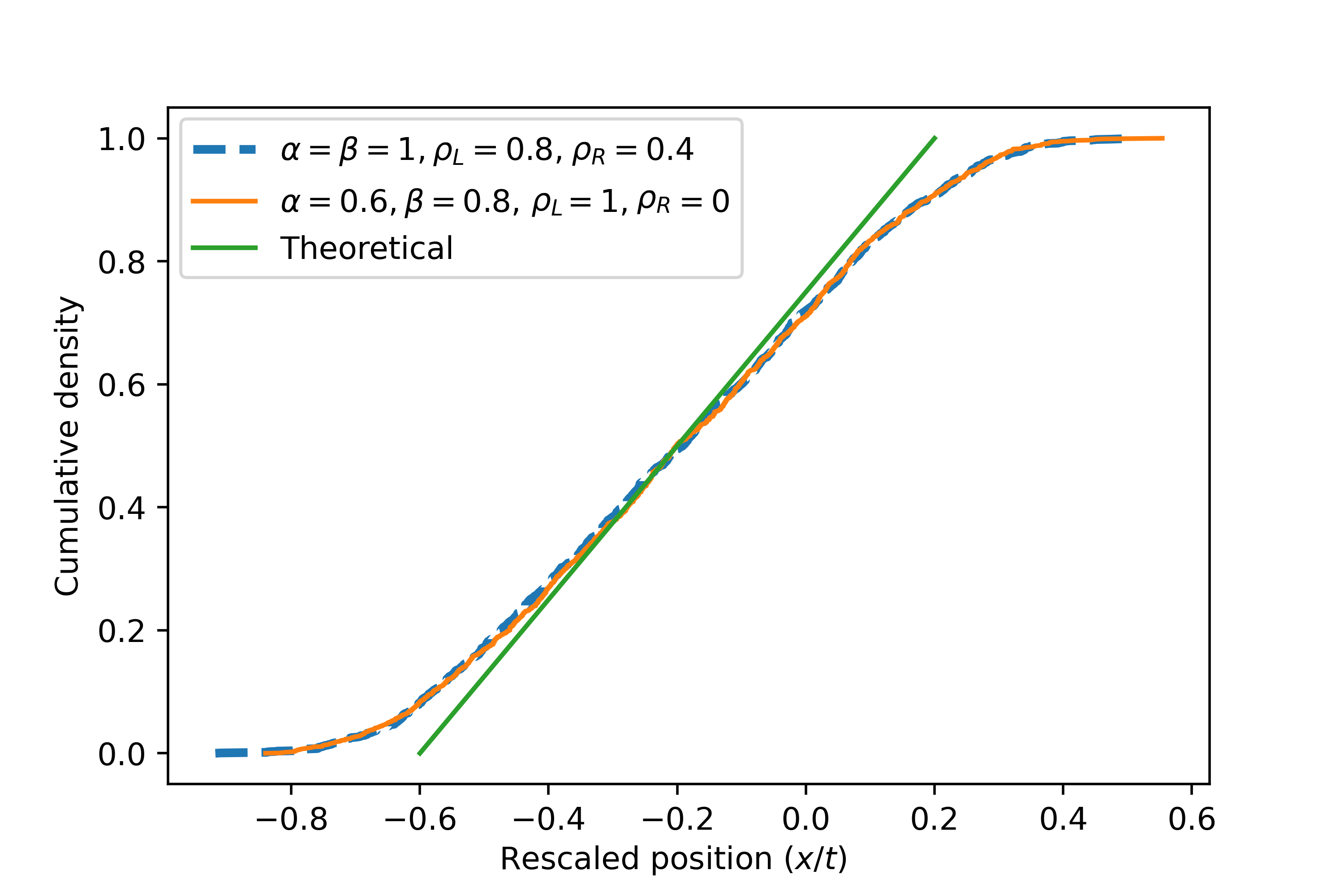}
	\end{subfigure}
\caption{Continuous orange line: sampled cumulative distribution of the rescaled position  
of an impurity particle in a \(1\text{-}0\) step initial profile with \(\alpha = 0.6\) and \(\beta = 0.8\) after time \(t = 500\).  
Dashed cyan line: sampled cumulative distribution of the rescaled position  
of a second-class particle in an initial profile \(\rho_L \text{-} \rho_R\), with \(\rho_L = \beta = 0.8\) and \(\rho_R = 1 - \alpha = 0.4\), after time \(t = 500\).  
Number of realizations in both cases: 2000.}
	\label{fig:comparison}
\end{figure}

\begin{cor}\label{cor-main2}
The asymptotic speed $v_\ast$ of an impurity initially situated at the interface of a $1-0$ profile is as follows: 
\begin{itemize}
\item If $\alpha+\beta< 1$: $v_\ast=\alpha-\beta$.
\item If $\alpha+\beta\geq 1$ and $\alpha,\beta<1$:  $v_\ast$ is 
a random variable uniformly distributed in the interval $[1-2 \beta, 2 \alpha - 1]$.
\end{itemize}

\end{cor}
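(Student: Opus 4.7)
The plan is to deduce the corollary directly from Theorem \ref{theo:main2}, which identifies (in distribution) the trajectory of the impurity in System~I with that of an ordinary second-class particle in System~II, where the Bernoulli densities are $\rho_L = \beta$ to the left and $\rho_R = 1-\alpha$ to the right. Once this identification is in place, the two cases reduce to two well-established results about second-class particles in TASEP.

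For the first bullet, I would observe that $\alpha + \beta < 1$ is equivalent to $\rho_L = \beta < 1-\alpha = \rho_R$. In System~II this places the second-class particle at the interface of a shock profile, with densities $\rho_L < \rho_R$. The classical result of \cite{ferrari1991microscopic} (already invoked in the proof of \eqref{asympt-speed-theo}) states that in this situation the second-class particle travels with the shock at asymptotic speed $1 - \rho_L - \rho_R$. Substituting the values of $\rho_L$ and $\rho_R$ gives
\begin{equation*}
v_\ast = 1 - \beta - (1-\alpha) = \alpha - \beta,
\end{equation*}
as claimed.

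For the second bullet, the hypothesis $\alpha+\beta \geq 1$ together with $\alpha,\beta < 1$ translates into $\rho_L = \beta \geq 1 - \alpha = \rho_R$ with both strictly in $(0,1)$. In System~II this is exactly the Ferrari--Kipnis setup: a second-class particle inserted at the interface of a Bernoulli$(\rho_L)$/Bernoulli$(\rho_R)$ product measure with $\rho_L \geq \rho_R$. By the theorem of Ferrari and Kipnis \cite{ferrari1995second}, its asymptotic speed is almost surely well-defined and uniformly distributed on $[1-2\rho_L,\, 1-2\rho_R]$. Using again $\rho_L = \beta$ and $\rho_R = 1-\alpha$, this interval becomes $[1-2\beta,\, 2\alpha - 1]$, which gives the stated distribution for the impurity in System~I.

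The proof is thus essentially a two-line application of Theorem \ref{theo:main2} followed by citation of known results, so there is no substantial obstacle to overcome; the only point worth stating carefully is the algebraic identification of the two parameter regimes with the shock and rarefaction-fan regimes of the associated second-class-particle system, which ensures that the correct classical theorem is invoked in each case.
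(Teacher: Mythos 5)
Your proposal is correct and follows exactly the paper's own argument: apply Theorem \ref{theo:main2} to pass to a second-class particle with $\rho_L=\beta$, $\rho_R=1-\alpha$, then invoke the shock result when $\rho_L<\rho_R$ and the Ferrari--Kipnis uniform-speed theorem when $\rho_L\geq\rho_R$. No gaps; the algebraic identification of the two regimes is the same as in the paper.
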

\begin{proof}
By applying Theorem \ref{theo:main2}, we conclude that the asymptotic speed of the impurity is identical to the asymptotic speed of a second-class particle initially positioned at the interface of a density profile
 $\rho_L-\rho_R$, with $\rho_L=\beta$ and $\rho_R=1-\alpha$.

If $\alpha+\beta<1$, then $\rho_L<\rho_R$,
corresponding to a shock profile that propagates at speed 
$1-\rho_L-\rho_R=\alpha-\beta$. The second-class particle remains at the position of the shock, and therefore moves with speed $\alpha-\beta$.

If $\alpha+\beta\geq 1$, then  $\rho_L\geq \rho_R$. In this case, we apply 
the result from \cite{ferrari1995second}, which establishes that the speed 
is uniformly distributed over the interval 
$[1-2\rho_L,1-2\rho_R]= [1-2\beta,2\alpha-1]$.
\end{proof}
If $\alpha>1$ or $\beta>1$ ,
an impurity initially placed at the interface of a $1-0$ profile has a finite probability 
profile has a finite probability of escaping from the rarefaction fan either from the right or from the left (the  exact probabilities will be
computed in Section \ref{sect:esc}). 
If the impurity escapes to the right, it acquires an asymptotic speed 
of $\alpha$; if it escapes to the left, it acquires an asymptotic speed of  $-\beta$.  Numerical evidence (see Figure \ref{conjecture}) strongly suggests that, for a non--escaping impurity, the asymptotic speed remains uniformly distributed over a suitable interval for any values of 
 of $\alpha$ and $\beta$. 
This is formally stated in the following conjecture, which generalizes Corollary \ref{cor-main2}.
\begin{conj} \label{conj}
For a non--escaping impurity with  $\alpha+\beta \geq 1$,
initially located at the interface of a $1-0$ profile, the probability 
distribution of its asymptotic speed is uniformly distributed over the 
interval $[\max(-1,1-2\beta),\min(1,2\alpha-1)]$.
\end{conj}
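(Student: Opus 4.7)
The plan is to extend Theorem \ref{theo:main2} (and Corollary \ref{cor-main2}) from the case $\alpha,\beta\leq 1$ to the regime where $\alpha>1$ or $\beta>1$, by reducing via Theorem \ref{first:Theorem} to the auxiliary process $\eta_2$. In $\eta_2$ with the $1$--$0$ initial profile the tagged particle is forever the rightmost particle and therefore advances as an independent Poisson process of rate $\alpha$; symmetrically, the tagged hole moves leftwards as an independent Poisson process of rate $\beta$. The non--tagged sites between the two tagged entities form a rate--$1$ TASEP, fed by holes from the right at rate $\alpha$ and by particles from the left at rate $\beta$.

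I then introduce a reference system: a standard TASEP (all rates $1$) with a second--class particle at the origin, and product Bernoulli initial measures of densities $\rho_L=\min(\beta,1)$ on the left and $\rho_R=1-\min(\alpha,1)$ on the right. By Ferrari and Kipnis \cite{ferrari1995second}, the asymptotic speed of this second--class particle is uniform on $[1-2\rho_L,1-2\rho_R]=[\max(-1,1-2\beta),\min(1,2\alpha-1)]$, which is exactly the target interval of the conjecture. Via the hole--particle pair representation of the reference and Burke's theorem (as in the proof of Theorem \ref{theo:main2}), the ``particle of the pair'' in the reference system is a Poisson process of rate $\min(\alpha,1)$ and its ``hole of the pair'' is a Poisson process of rate $\min(\beta,1)$.

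The next step is to couple the two systems by decomposing the clocks of the tagged particle (resp.\ tagged hole) in $\eta_2$ into a rate--$\min(\alpha,1)$ basic component, synchronised with the reference, and an independent rate--$(\alpha-1)^+$ (resp.\ rate--$(\beta-1)^+$) excess component. Excess jumps that occur while the tagged entity still belongs to the pair shift $x_2(t)$ to the right of the reference pair; once the tagged entity has been ejected from the pair (which, as $\alpha,\beta\geq 1$, is a permanent separation because the slower pair cannot catch up with a faster tagged object), further excess jumps no longer influence $x_2(t)$. A race between the excess clocks and the detachment events should reproduce the escape probabilities in \eqref{esc-prob}, and on the complementary non--escape event Lemma \ref{restriction-lemma} would identify the post--detachment evolution of $x_2(t)$ with that of the reference pair, up to an $O(1)$ discrepancy that is irrelevant for the asymptotic speed.

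The hard part is justifying that, conditionally on non--escape, the state of the non--tagged bulk at the detachment time agrees in distribution with the initial bulk of the reference system (product Bernoulli with densities $\rho_L$ and $\rho_R$). The bulk produced in $\eta_2$ via tagged--entity injection is not manifestly a product measure, and the clean Burke--type identity exploited in Theorem \ref{theo:main2} breaks down once $\alpha>1$ or $\beta>1$. Closing this gap rigorously probably requires either an extended stationarity argument along the characteristics of Burgers' equation or an adaptation of the Busemann--function framework of \cite{cator2013busemann}; this is plausibly why the statement is left as a conjecture rather than proved as a theorem.
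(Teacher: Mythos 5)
The statement you are asked to prove is labelled a \emph{conjecture} in the paper: the authors offer no proof, only numerical evidence (Figure \ref{conjecture}) and the consistency check with the escape probabilities of eq.~\eqref{esc-prob}. So there is no proof in the paper to compare yours against, and the relevant question is whether your sketch closes the gap. It does not, and you say so yourself: the step where you must identify, conditionally on non--escape, the law of the bulk seen by the pair at the detachment time with a product Bernoulli measure of densities $\min(\beta,1)$ and $1-\min(\alpha,1)$ is precisely the missing ingredient. When $\alpha>1$ the tagged particle outruns the rarefaction fan, the wake it leaves is governed by the fan rather than by a stationary queue, and Burke's theorem (which underlies the clean argument of Theorem \ref{theo:main2}) gives no control over that wake; nothing in the paper's toolkit (Theorem \ref{first:Theorem}, Corollary \ref{interch-coroll}, Lemma \ref{restriction-lemma}) supplies the required distributional identity.

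Two further points in your sketch are imprecise. First, Lemma \ref{restriction-lemma} requires the tagged trajectories in the two systems being compared to be \emph{identical}, so you cannot invoke it directly for two systems whose tagged particles differ by ``excess'' jumps after detachment; you would need an additional argument that post--detachment excess jumps cannot be felt by the pair, which is exactly the monotonicity/ordering statement that has to be proved. Second, the claim that detachment is a ``permanent separation because the slower pair cannot catch up'' is not correct pathwise: immediately after ejection the tagged particle sits adjacent to the pair's particle, and the gap performs a random walk that returns to contact with positive probability; only the eventual, almost--sure separation (after finitely many re--contacts) holds, and the argument must be organised around that. Also note that your separation argument assumes $\alpha,\beta\geq 1$, whereas the conjecture covers the full region $\alpha+\beta\geq 1$, including cases where one rate is below $1$. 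Your overall strategy (reduce to $\eta_2$, compare with a Ferrari--Kipnis reference system, handle the escape event separately) is the natural one and matches the spirit of the paper's Section \ref{sect:speed-fan}, but as it stands it is a programme, not a proof --- which is consistent with the authors leaving the statement open and suggesting the Busemann--function framework of \cite{cator2013busemann} as a possible route.
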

For a non--escaping impurity initially situated at the interface of a 
density profile of type $\rho_L-\rho_R$, with arbitrary $\rho_L>\rho_R$,
the analysis in Section \ref{Density profile}  indicates that its asymptotic speed is distributed over the interval
 $[\max(1-2\rho_L, 1-2\beta) 
, \min(2\alpha -1, 1-2\rho_R)]$. However, numerous numerical simulations
(see figure \ref{nonuniform}) suggest that, unless $\alpha=\beta=1$, 
this distribution is not uniform. Determining these non-uniform distributions poses an interesting open problem.

\begin{figure}[h!]
\centering
\includegraphics[width=0.5\linewidth]{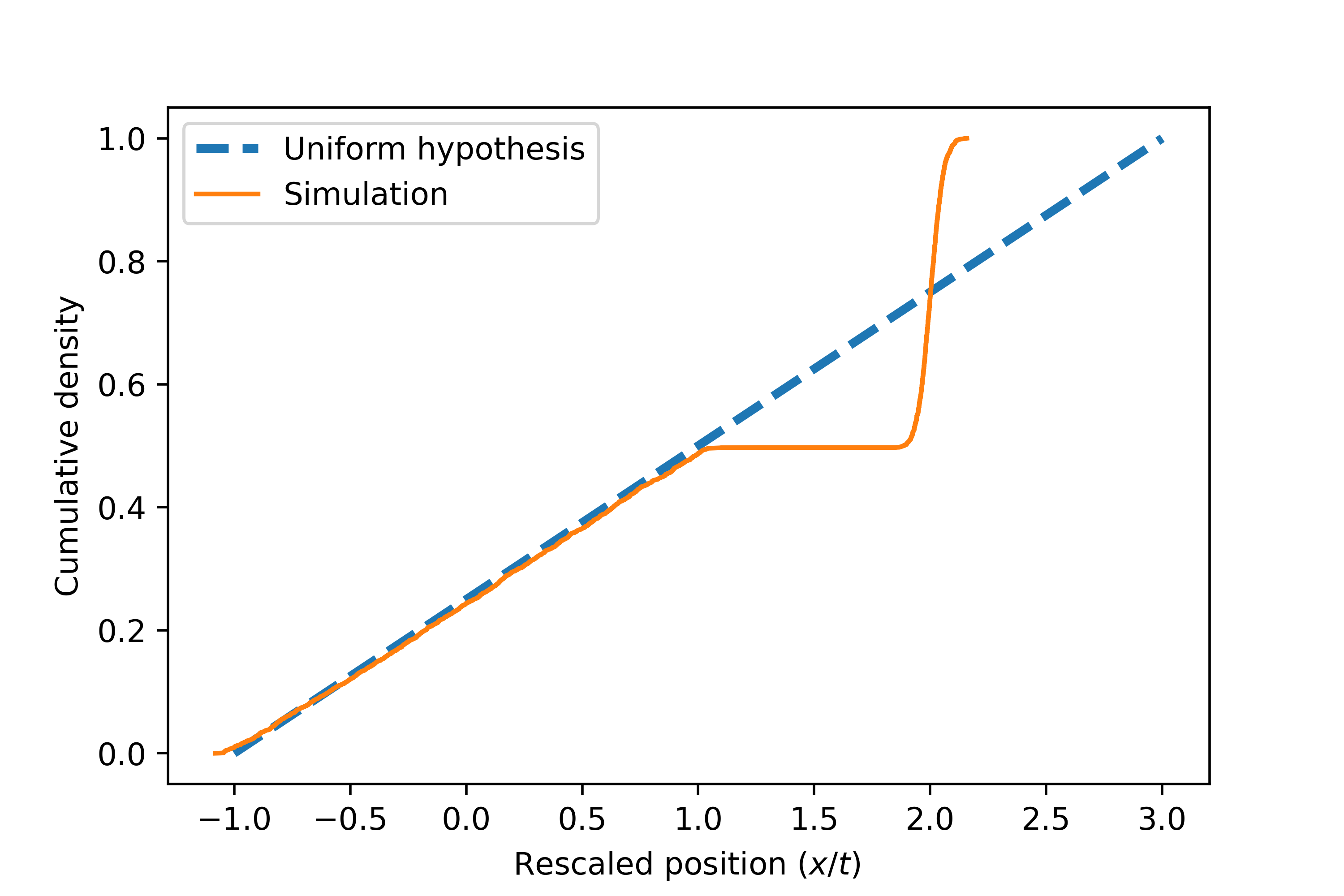}
\caption{Comparison between the simulated cumulative distribution of the asymptotic speed and the uniform distribution, given $\beta= 1, \alpha=2$,
with an initial  $1-0$ particle density profile. 
The simulation was conducted over 4000 realizations, each evolved to $t=1000$. The theoretical uniform distribution is represented across the interval  $[1-2 \beta, 2 \alpha - 1]$, which is the one valid
for $\alpha, \beta < 1$. 
The observed slope supports consistency with Conjecture \ref{conj} and the escape probability formula.}
\label{conjecture}
\end{figure}

\begin{figure}[h!]
	\centering
	\begin{subfigure}[b]{0.4\linewidth}
		\includegraphics[width=\linewidth]{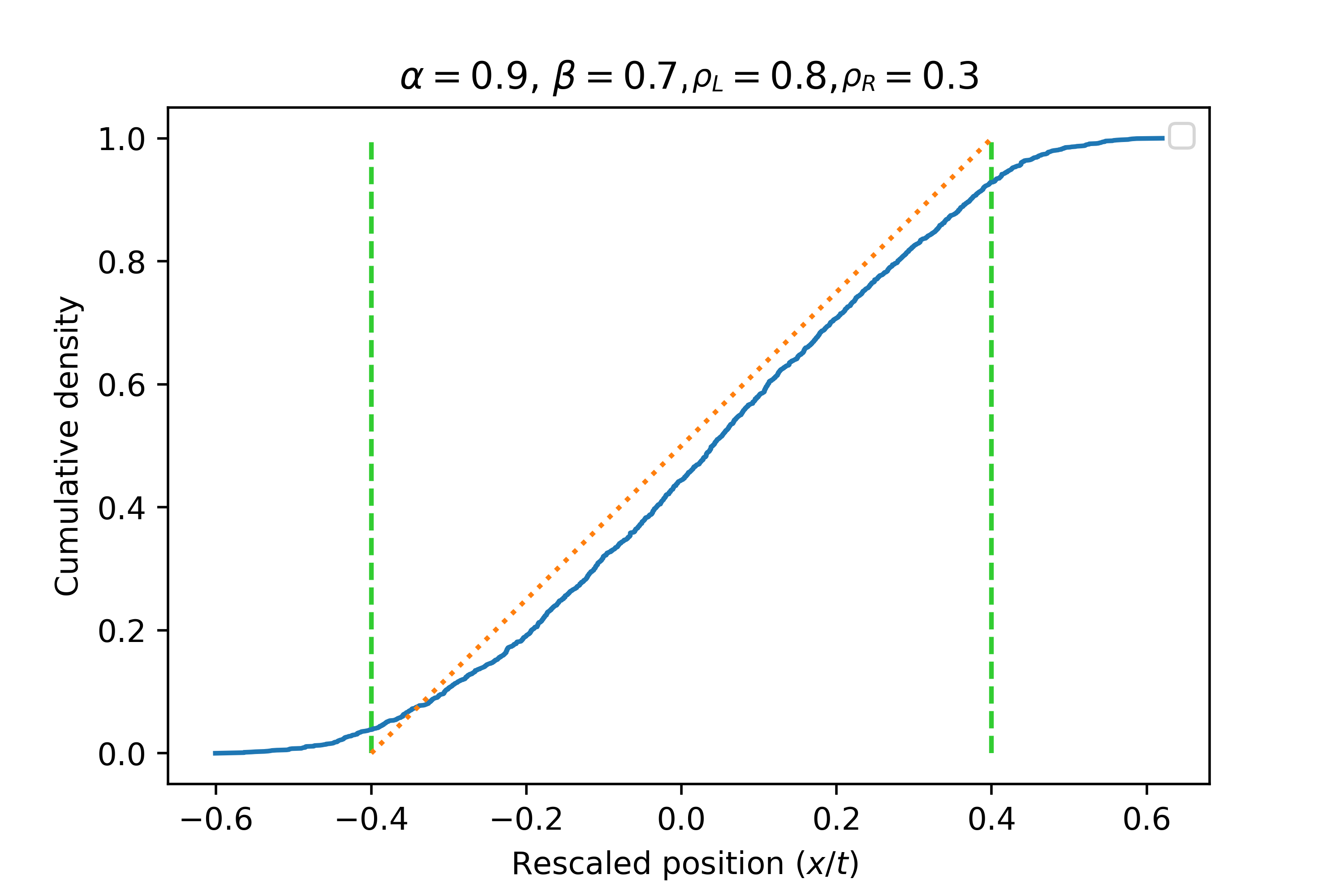}
		\caption{}
		\label{ff}
	\end{subfigure}
	\begin{subfigure}[b]{0.4\linewidth}
		\includegraphics[width=\linewidth]{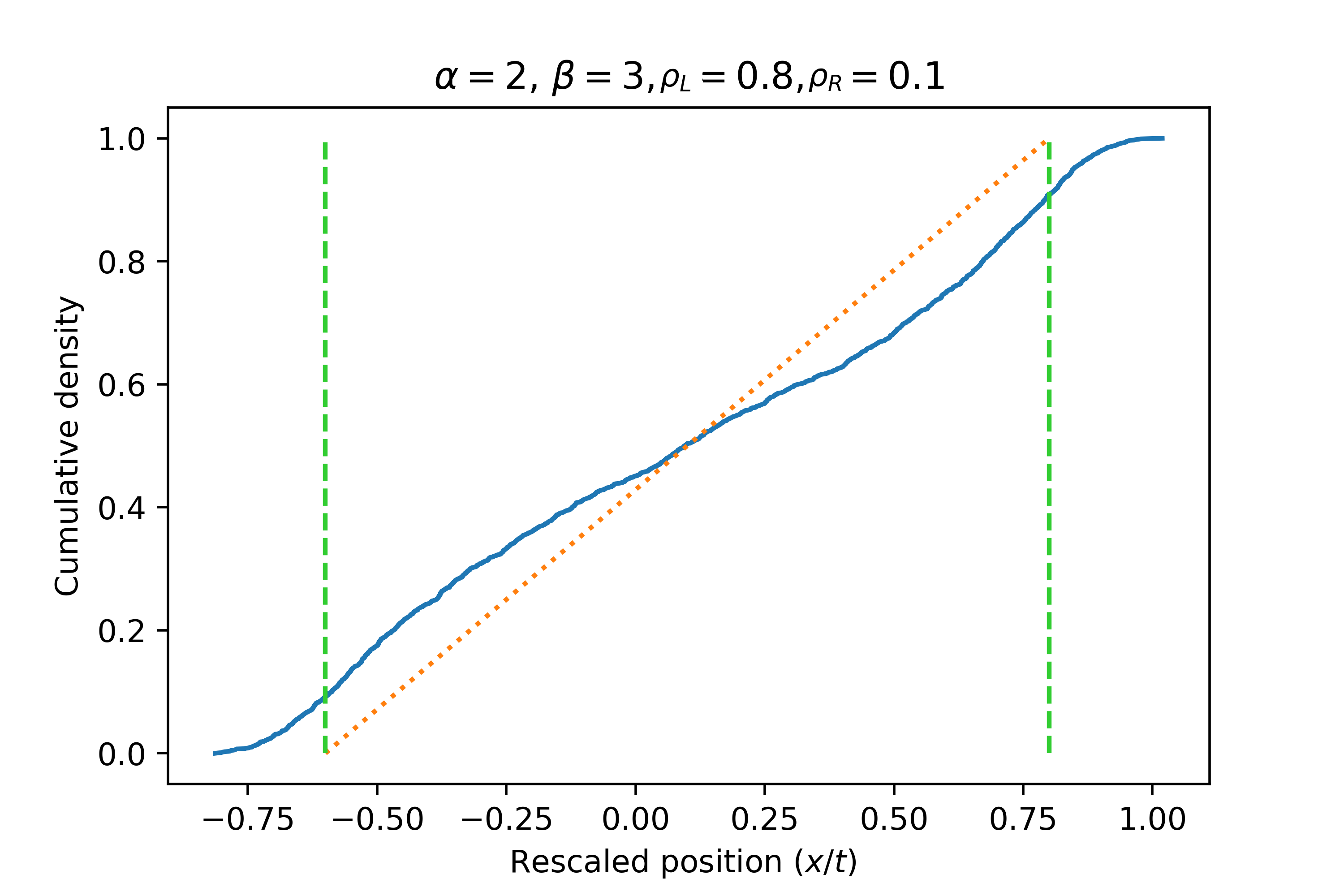}
		\caption{}
		\label{gg}
	\end{subfigure}
	\begin{subfigure}[b]{0.4\linewidth}
		\includegraphics[width=\linewidth]{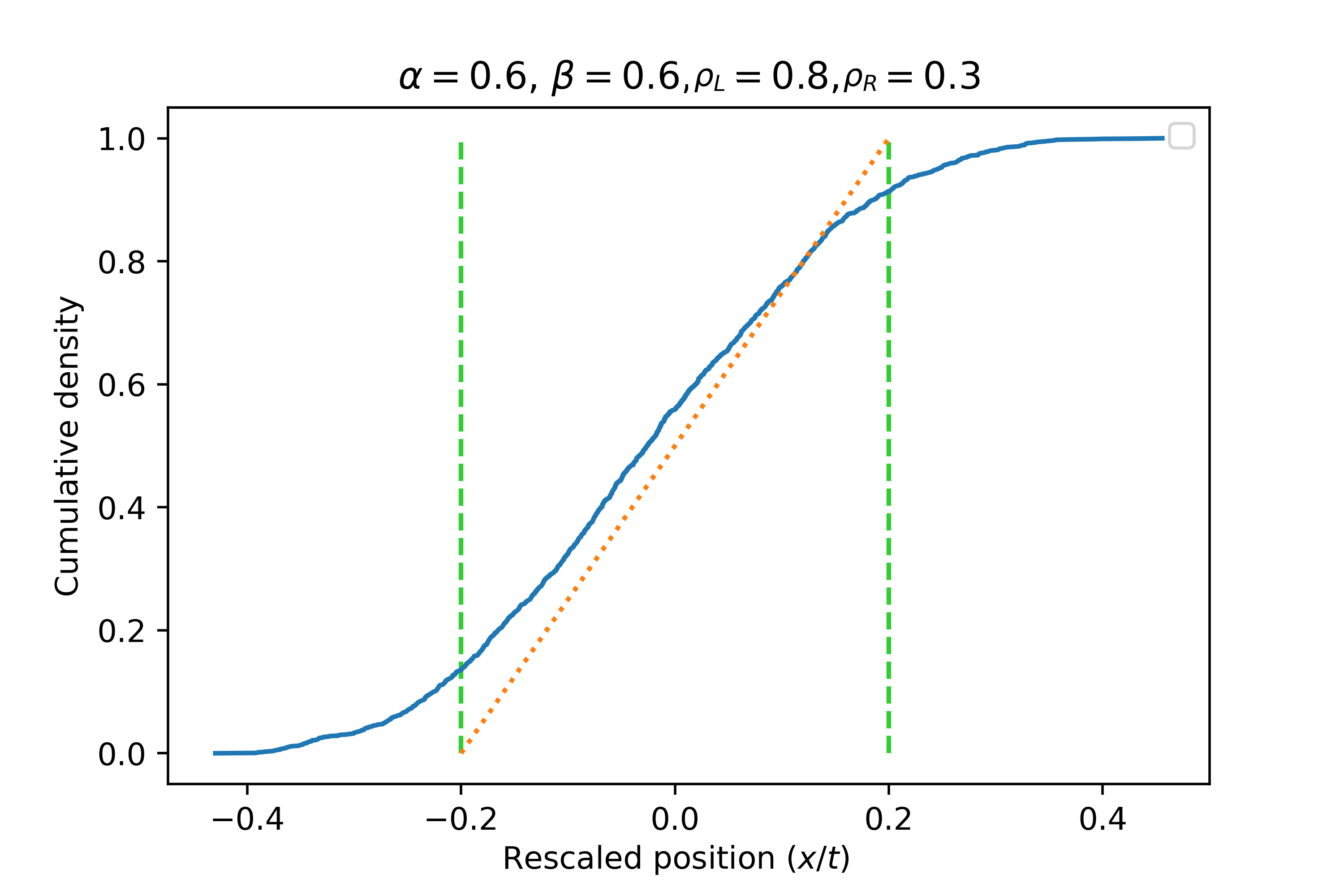}
		\caption{}
		\label{ff2}
	\end{subfigure}
	\begin{subfigure}[b]{0.4\linewidth}
		\includegraphics[width=\linewidth]{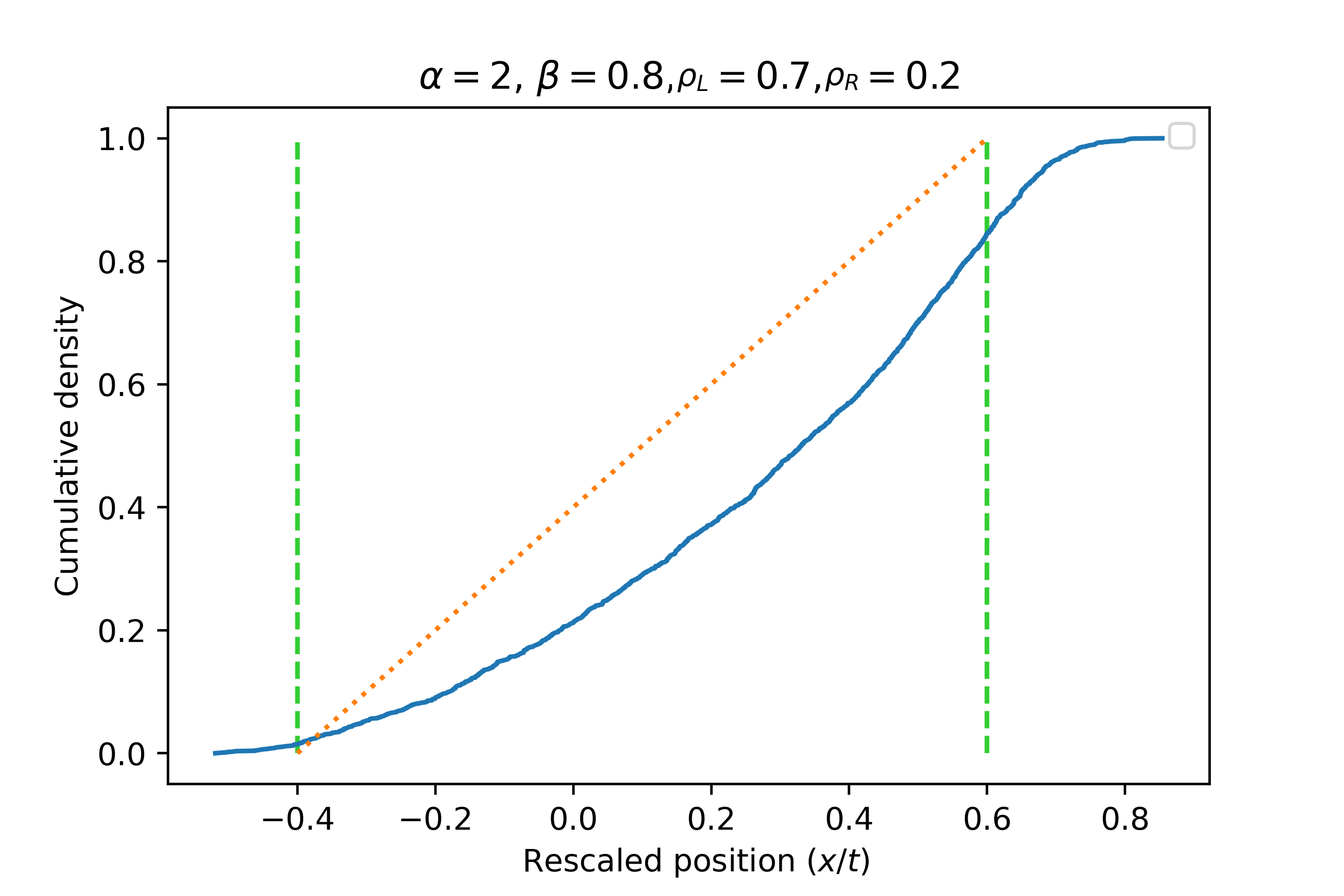}
		\caption{}
		\label{ff3}
	\end{subfigure}
\caption{Examples illustrating nonuniform distributions of the 
asymptotic speed of the impurity particle in cases where $(\rho_R, \rho_L) 
\neq (1, 0)$ and $(\alpha, \beta) \neq (1, 1)$. The predicted range for 
each distribution is shown by vertical dashed lines, while the dotted 
orange lines represent a hypothetical uniform distribution. Each graph is 
based on 1000 realizations.}
	\label{nonuniform}
\end{figure}

\subsection{Escaping probability}
\label{sect:esc}

In this final Section,  we determine the probability that an impurity escapes from a rarefaction fan.
Let's consider an initial condition where all sites to the right of the impurity are empty, while particles are present to its left.  
In this setup, we may ask what is the probability that the impurity is never overtaken by the first particle on its left. 
Since the evolution of a particle is unaffected by other particles positioned further to its left, we can reduce the analysis to a two-particle model involving only the rightmost particle and the impurity.
As long as $\alpha\leq 1$ and 
$\beta>0$, it is straightforward to see that the particle will overtake the impurity with probability $1$.
To understand why, observe that as long as the impurity remains in front of the particle, the distance between them behaves like an asymmetric random walk on $\mathbb{N}_{>0}$.  In this walk, the probability of the distance increasing  is  $\frac{\alpha}{1+\alpha}$, 
while the probability of the distance decreasing is
$\frac{1}{1+\alpha}$.
When $\alpha\leq 1$ the distance  will return to $1$ infinitely often. 
Moreover, since  $\beta>0$, the particle will 
eventually overcome the impurity with  certainty.

When $\alpha > 1$, the impurity has a higher average rightward  speed than 
the particle, resulting in a non--zero probability that the impurity will 
always stay to the right of the particle. We define this probability as the
\emph{right escaping probability},.
If this right escape scenario occurs, the impurity  behaves like a free particle, moving with an asymptotic speed $\alpha>1$.

\begin{figure}[h!]
	\centering
	\begin{subfigure}[b]{0.45\linewidth}
		\includegraphics[width=\linewidth]{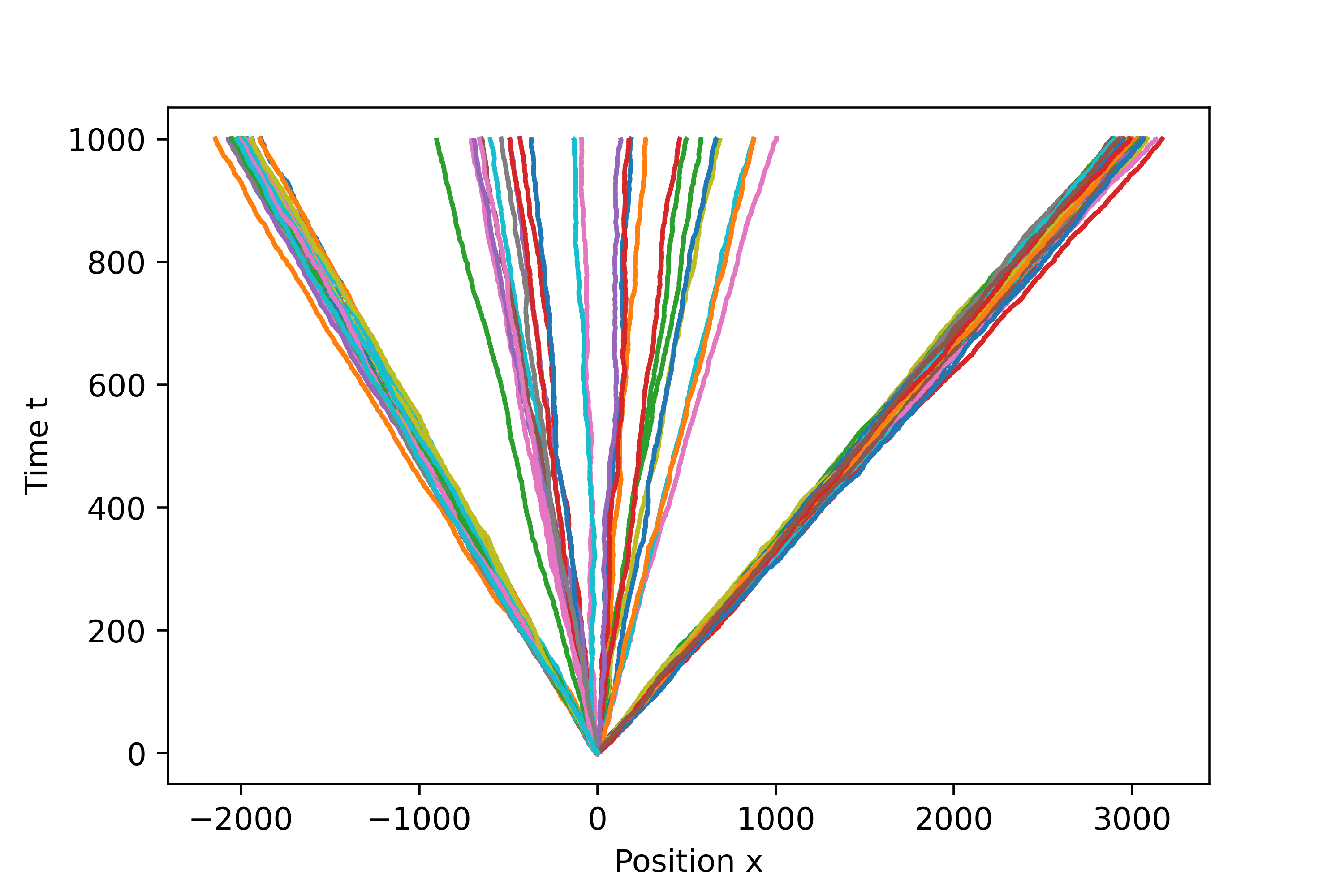}
	\end{subfigure}
	\begin{subfigure}[b]{0.45\linewidth}
		\includegraphics[width=\linewidth]{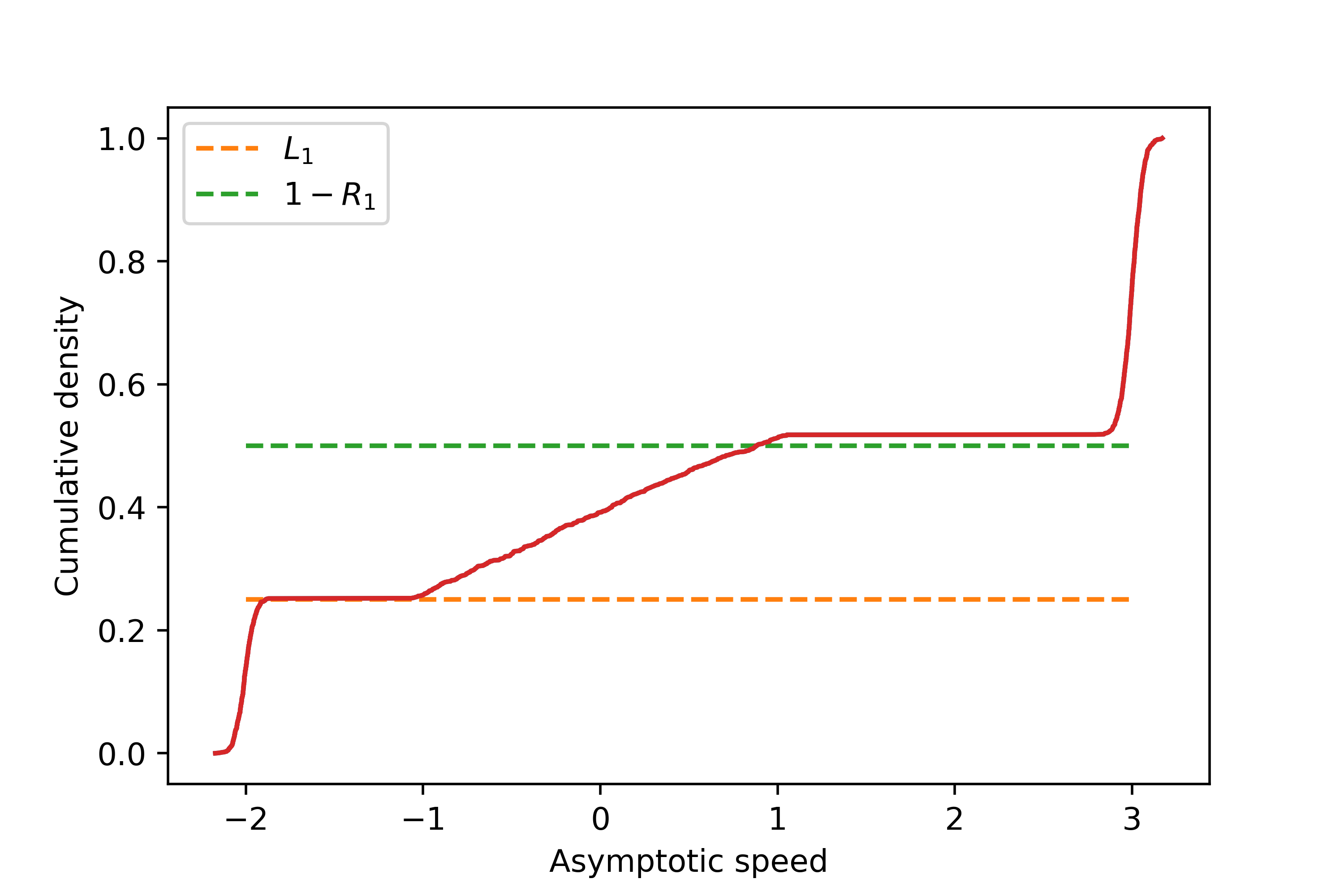}
	\end{subfigure}
	\caption{Trajectories (on the left) and asymptotic speeds cumulative distribution (on the right) of the impurity for $\alpha = 3$ and $\beta = 2$, for a $1-0$ step initial profile. Cumulative density is plotted over 2000 realization for time $t=1000$. Dashed lines represent theoretical values of the escaping probabilities. 100 trajectories are plotted on the left.}
	\label{fig1}
\end{figure}

Let $X_k$ be the random variable representing the distance between the impurity and the particle to its left after a total of $k$ jumps of both the particle and the impurity. The variable $X$ follows an asymmetric random walk with the following rules:
\begin{align*}
&P(X_{k+1} = n+1|X_{k}=n) = \frac{\alpha}{\alpha + 1} := p > \frac{1}{2} \qquad n>1\\
&P(X_{k+1} = n-1|X_{k}=n) = \frac{1}{\alpha + 1} = 1-p \qquad n>1\\
%&P(X_{0} = 1) = 1 \\
&P(X_{k+1} = 2|X_{k}= 1) = \frac{\alpha}{\alpha + \beta} \\
&P(X_{k+1} = -1|X_{k}= 1) = \frac{\beta}{\alpha + \beta}  
\end{align*}
Our problem corresponds to a random walker with absorbing boundaries \cite{kac1945random}.
Let $$R_{n} = P(X_{l} \geq 1 , \forall\; l >k | X_{0} = n \geq 1)$$
be the right escaping probability 
of a impurity starting from a distance $n$ from the rightmost particle. 
The quantity $R_n$ satisfies the following relations
\begin{align}\label{recursion}
R_{n} &= pR_{n+1} + (1-p)R_{n-1}  \qquad n\geq 2\\\label{recursion2}
R_{1} & = \frac{\alpha}{\alpha+\beta} \,R_{2}
\end{align}
The general solution of the recursion eq.(\ref{recursion}) is of the form
\begin{equation}
R_{n} = A + B \left(\frac{1-p}{p}\right)^{n}=A + B \alpha^{-n} ,
\end{equation} 
where $A$ and $B$ are constants to be determined from initial conditions $R_1$ and $R_2$. 
\begin{equation}
A=\frac{\alpha R_2-R_1}{\alpha-1},\qquad B=\alpha^2 \frac{R_1-R_2}{\alpha-1}.
\end{equation}
In the limit $n\rightarrow \infty$ the escape probability must tend to one, which means that we must have $A=1$.  
Keeping into account that we have also the relation between $R_1$ and $R_2$ of eq.\eqref{recursion2} we determine $R_1$
\begin{equation}
R_{1} =  \dfrac{\alpha - 1}{\alpha + \beta - 1}.
\end{equation}
This is precisely the quantity $P_R$ of eq.(\ref{esc-prob}).
For arbitrary initial distance $n$ we obtain
\begin{equation}
R_{n} = 1 -  \dfrac{\beta}{(\alpha + \beta -1)}\alpha^{1-n}.
\end{equation}
Now suppose that the impurity is at a distance $n$ to the left of a region with a uniform density $\rho_L$ of  particles. We denote the escape probability for such an impurity as $R_n^{(\rho_L-0)}$.
The initial distance between the rightmost particle and the impurity is geometrically distributed as follows:
$$
P_{n,k}=\left\{ 
\begin{array}{cc}
\rho_L(1-\rho_L)^{k-n} & k\geq n\\
0 & k<n
\end{array}
\right.
$$ 
So that for $R_n^{(\rho_L-0)}$ we find
\begin{equation}
R_n^{(\rho_L-0)}=\sum_{k=1}^\infty P_{n,k}R_k= 
1-
\frac{\rho_L\beta\alpha^{2-n}}{(\alpha + \beta -1)(\alpha + \rho_L -1)}.
\end{equation}
Using the particle--hole symmetry ($\alpha \leftrightarrow \beta, \rho_R\leftrightarrow 1-\rho_L$), we obtain as well $L_n^{(1-\rho_R)}$, the left escape probability of a impurity at distance $n$ to the left of a region with a uniform density $\rho_R$ of first-class particles
\begin{equation}
L_n^{(1-\rho_R)} = 
1-
\frac{\alpha(1-\rho_R)\beta^{2-n}}{(\alpha + \beta -1)(\beta -\rho_R)}.
\end{equation}
In particular $P_L$ of eq.(\ref{esc-prob}) is recovered as 
\begin{equation}
P_L=L_1^{(1-0)} =\frac{\beta-1}{\alpha+\beta-1}.
\end{equation}

\section{Conclusion}
The study of a single impurity in the Totally Asymmetric Simple Exclusion Process (TASEP) is compelling for multiple reasons. From a macroscopic standpoint, it exemplifies the impact of a defect -- represented by the impurity -- within a scalar conservation law, where the Burgers equation often serves as a prototypical model.
From a microscopic perspective, the model advances the concept of a 
second--class particle -- typically used just as a theoretical tool -- into a more complex and realistic scenario.

There are several promising directions for future research. First, our new coupling approach may be integrated within the broader framework of last-passage percolation models. In this context, insights into the behavior of competition interfaces studied in works like
\cite{cator2012busemann, cator2013busemann, 
ferrari2005competition,ferrari2006roughening} 
could potentially extend to non--homogeneous environments, opening 
the possibility of calculating the limiting distribution for the asymptotic behavior of an impurity under any general
 $\rho_L-\rho_R$ initial condition, as well as testing our conjecture related to non-escaping impurities.
%It would be interesting to explore the exact relationship between the  model 
%with a single impurity and the two--species model studied in \cite{cantini2022hydrodynamic} in the 
%limit of infinitesimal impurity density. 
%In certain cases, the limit shapes of both models are found to coincide, as discussed in REF. 

Another intriguing direction involves exploring interactions between two or more impurities, following a line of inquiry similar to that in \cite{ferrari2009collision} for two second-class particles. This approach could provide deeper insights into the dynamics and collective behaviors in multi--impurity systems within TASEP.

Finally, it would be interesting to extend our analysis to the case of an impurity in a partially asymmetric environment. 
Recent work by Lobaskin and collaborators \cite{lobaskin2022matrix,lobaskin2023integrability,lobaskin2024current}
has identified specific dynamics for the impurity under which the model 
remains solvable. Investigating whether our approach can be adapted to 
this solvable scenario presents a promising direction for further 
research.
%
%,
%Lobaskin and collaborators have determined the dynamics of the impurity for which the model is solvable. 
%Exploring whether our analysis could be extended to this solvable 
%case would be interesting.

%\cite{lobaskin2024current}  
%
%\cite{lobaskin2022matrix}

\section*{Acknowledgments}

We thank Patricia Gon\c calves  for reading the manuscript and for her 
valuable remarks and discussions. 
%A. Zhara thanks the Instituto Superior Técnico and FCT (Portugal) for partially funding this work through grant number BL155/2023-IST-ID.
The work of A. Zhara has been partially funded by the ERC
Starting Grant 101042293 (HEPIQ), by FCT/Portugal through project UIDB/04459/2020 with DOI identifier 10-54499/UIDP/04459/2020, and through the grants 2020.03953.CEECIND with DOI identifier 10.54499/2020.03953.CEECIND/CP1587/\\CT0013, 2022.09232.PTDC with DOI identifier 10.54499/2022.09232.PTDC, and the grant number BL155/2023-IST-ID from the Instituto Superior Técnico and FCT (Portugal).

\typeout{}
\bibliography{biblio-gen}
\bibliographystyle{ieeetr}

\end{document}